\newtheorem{property}[theorem]{Property}
\newtheorem{observation}[theorem]{Observation}
\newcommand{\cO}{O}
\renewcommand{\inf}{\infty}
\newcommand{\ignore}[1]{}
\algnewcommand\algorithmicswitch{\textbf{switch}}
\algnewcommand\algorithmiccase{\textbf{case}}
\newcommand{\LOCAL}{LOCAL\xspace}
\newcommand{\eps}{\varepsilon}
\DeclareMathOperator{\polylog}{polylog}
\DeclareMathOperator{\poly}{poly}
\newcommand{\complexityclass}[2][]{\ensuremath{\mathsf{#2}\ifthenelse{\isempty{#1}}{}{(#1)}}}
\newcommand{\hide}[1]{}
\DeclareMathOperator{\dis}{dis}
\renewcommand{\tilde}{\widetilde}
\title{Improved MPC Algorithms for MIS, Matching, and Coloring  on Trees and Beyond}
\keywords{Massively Parallel Computation, MIS, Matching, Coloring}
\author{Mohsen Ghaffari}{ETH Zurich, Switzerland}{ghaffari@inf.ethz.ch}{}{}
\author{Christoph Grunau}{ETH Zurich, Switzerland}{cgrunau@student.ethz.ch}{}{}
\author{Ce Jin}{Tsinghua University, China}{cejin@mit.edu}{}{}
\authorrunning{M.\,Ghaffari,  C.\,Grunau, and  C.\,Jin}
\begin{document}
\maketitle

\begin{abstract}

We present $O(\log\log n)$ round scalable Massively Parallel Computation algorithms for maximal independent set and maximal matching, in trees and more generally graphs of bounded arboricity, as well as for coloring trees with a constant number of colors. Following the standards, by a \emph{scalable} MPC algorithm, we mean that these algorithms can work on machines that have capacity/memory as small as $n^{\delta}$ for any positive constant $\delta<1$. Our results improve over the $O(\log^2\log n)$ round algorithms of Behnezhad et al.\ [PODC'19]. Moreover, our matching algorithm is presumably optimal as its bound matches an $\Omega(\log\log n)$ conditional lower bound of Ghaffari, Kuhn, and Uitto [FOCS'19].
\end{abstract}

\section{Introduction and Related Work}
We present improved algorithms for some of the central graph problems in the study of large-scale algorithms---namely maximal matching and matching approximation, maximal independent set, and graph coloring---in the Massively Parallel Computation (MPC) setting. We first review the related model and known results, and then we present our contributions. 

\subsection{Massively Parallel Computation}

The MPC model was introduced by Karloff et al.~\cite{karloff2010model}, and after some refinements~\cite{goodrich2011sorting, beame2014skew}, it has by now become the standard theoretical abstraction for studying algorithmic aspects of large-scale data processing. This model captures the commonalities of popular practical frameworks such as MapReduce~\cite{dean2008mapreduce}, Hadoop~\cite{white2012hadoop}, Spark~\cite{zaharia2010spark} and Dryad~\cite{isard2007dryad}. 

In the particular case of MPC for graph problems, we assume that our $n$-node $m$-edge input graph is partitioned arbitrarily among a number of machines, each with memory $S$. This \emph{local memory} $S$ is assumed to be considerably smaller than the entire graph size. Over all the machines, the summation of memories (called the \emph{global memory}) should be at least $\Omega(m+n)$, and often assumed to be at most $\tilde{O}(m+n)$. 
The machines communicate in synchronous message-passing rounds, subject to only one constraint: per round, each machine can only send or receive a data of size at most $S$, which is all that it could fit in its memory. The objective is to solve various graph problems in the fewest possible number of rounds.

As the graph sizes are getting larger and larger (at a pace that exceeds that of single computers), we would like to have algorithms where each machine's local memory is much smaller than the entire graph size. However, dealing with smaller local memory is known to increase the difficulty of the problem, as we soon review in state-of-the-art. 
Based on this, the state-of-the-art in MPC algorithms can be divided into three regimes, depending on how the local memory $S$ compares with the number of vertices $n$: (A) The strongly super-linear memory regime where $S\geq n^{1+\delta}$ for a constant $\delta>0$, (B) The near-linear memory regime where $S= n \poly(\log n)$, and (C) the strongly sublinear memory regime where $S\leq n^{\delta}$ for a positive constant $\delta <1$. We note that algorithms in the last category are the most desirable ones and they are usually referred to as \emph{scalable MPC algorithms}. 

\subsection{State-of-the-Art}
For the problems under consideration in this paper, simple $O(\log n)$ time algorithms follow from classic literature in the LOCAL model and PRAM \cite{luby1986simple, alon1986fast}. The primary objective in MPC is to obtain significantly faster algorithms than the PRAM counterpart, ideally just constant rounds or $O(\log\log n)$ rounds. Over the past few years, there has been progress toward this, gradually moving toward the lower memory regimes.

For the strongly superlinear memory regime, Lattanzi et al.~\cite{LattaMSV2011Filtering} presented $O(1)$-round algorithms for maximal matching and maximal independent set. Progress to lower memory regimes was slow afterward. But that changed with a breakthrough of Czumaj et al.~\cite{czumaj2019roundsiam}, who gave an $O(\log^2\log n)$ round algorithm for $(2+\eps)$-approximation of maximum matching in the near-linear memory regime. Shortly after, for the same near-linear memory regime, Assadi et al.~\cite{Assadi2017CoresetsME} gave an $O(\log\log n)$ round algorithm for $(1+\eps)$-approximation of maximum matching and constant approximation of minimum vertex cover and independently Ghaffari et al.~\cite{GhaffGKMR2018Improved} gave $O(\log\log n)$ round algorithms for $(1+\eps)$-approximation of maximum matching, $(2+\eps)$-approximation of minimum vertex cover, and maximal independent set. Finally, Behnezhad et al.~\cite{BehneHH2019Exponentially} gave an $O(\log\log n)$ round algorithm for maximal matching. 

For the much more stringent strongly sublinear memory regime, there has also been some progress but much slower: For the case of general graphs, Ghaffari and Uitto~\cite{GhaffU2019Sparsifying} provide an $\tilde{O}(\sqrt{\log \Delta}+\log \log n)$ algorithm for graphs with maximum degree $\Delta$, which remains the best known.
When $\Delta > n^{\Omega(1)}$, the only known algorithms that outperform this bound and reach the $\poly(\log\log n)$ regime are results for special graph families, namely trees~\cite{brandt2019breaking,hajiaghayi1,hajiaghayi2} and more generally graphs of small, e.g., polylogarithmic, arboricity~\cite{BrandFU2018Matching, BehneDHK2018Massively,BehneBDFHKU2019Massively}. Recall that the \emph{arboricity} $\alpha$ of a graph is the minimum number of forests into which its edges can be partitioned. Many natural graph classes, such as planar graphs and graphs with bounded treewidth, have small arboricity.    Concretely, Behnezhad~et~al.~\cite{BehneBDFHKU2019Massively} provide Maximal Matching and MIS algorithms that run in $O((\log\log n)^2)$ rounds and leave remaining graphs with maximum degree $\poly(\max(\log n, \alpha))$. By invoking the aforementioned algorithm of Ghaffari and Uitto~\cite{GhaffU2019Sparsifying}, they obtain maximal matching and MIS algorithms that have round complexity $O(\sqrt{\log \alpha} \log \log \alpha + (\log\log n)^2)$. Finally, a result of Ghaffari, Kuhn, and Uitto~\cite{GhaffKU2019Conditional} shows a conditional lower bound (for \emph{component-stable} MPC algorithms\footnote{Most of the MPC algorithms in the literature are component-stable. One exception is a recent result due to Czumaj, Davies, and Parter \cite{czumaj2019graph}, showing a deterministic MPC algorithm for MIS and maximal matching using strongly sublinear local memory.}) of $\Omega(\log\log n)$ for maximal matching and MIS in this strongly sublinear memory regime. In essence, they show that component-stable MPC algorithms in the strongly sublinear memory regime cannot be more than exponentially faster compared to their LOCAL-model counterpart, and then they use lower bounds in the LOCAL model. Their result is conditioned on the widely believed $\Omega(\log n)$ complexity of the connectivity problem in the strongly sublinear memory regime. We note that although an $\Omega(\log n)$ round complexity lower bound is widely believed to be true for the connectivity problem in the strongly sublinear memory regime, even for distinguishing an $n$-node cycle from two $n/2$-node cycles, proving any super-constant lower bound would imply $\mathsf{P}\not \subset \mathsf{NC}^1$, a major breakthrough in circuit complexity~\cite{roughgarden2016shuffles}. Concretely, in the case of maximal matching and MIS, the results of Ghaffari et al.~\cite{GhaffKU2019Conditional} invoke the $\Omega(\sqrt{\log n/\log \log n})$ round LOCAL-model lower bound of Kuhn, Moscibroda, and Wattenhofer~\cite{Kuhn:2016:LCL:2906142.2742012} to show that an $o(\log\log n)$-round MPC algorithm for maximal matching or MIS would imply an $o(\log n)$ round algorithm for connectivity, which would break the conjectured $\Omega(\log n)$ complexity. We emphasize that in the case of maximal matching, this $\Omega(\log\log n)$ round complexity lower bound for MPC algorithms holds even if the input graph is a tree.

\subsection{Our Contribution}
Our main contribution is an improvement over the algorithm of Behnezhad~et~al.\ \cite{BehneBDFHKU2019Massively}. We obtain MPC algorithms in the strongly sublinear memory regime that solve maximal matching and MIS in $O(\log\log n)$ rounds, in trees and constant-arboricity graphs. And more generally as a function of the arboricity $\alpha$, our maximal matching and MIS algorithms run in $O(\sqrt{\log \alpha} \log \log \alpha + \log\log n)$ rounds, where the first term again comes from invoking the algorithm of Ghaffari and Uitto~\cite{GhaffU2019Sparsifying} for remaining graphs with maximum degree $\poly(\alpha)$.

\begin{theorem}
\label{thm:lowarb:main}
	There is an $O(\sqrt{\log \alpha} \log \log \alpha + \log\log n)$ round MPC algorithm using $n^{\delta}$ local memory and $\tilde O(n+m)$ global memory 	 that with high probability computes a maximal matching and a maximal independent set of a graph with arboricity $\alpha$.
\end{theorem}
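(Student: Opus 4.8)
The plan is to split the algorithm into a \emph{degree-reduction} phase and a \emph{low-degree} phase. The low-degree phase is immediate from known results: once the residual graph has maximum degree at most $\poly(\max(\alpha,\log n))$, we invoke the sparsification algorithm of Ghaffari and Uitto~\cite{GhaffU2019Sparsifying}, which on a graph of maximum degree $\Delta$ runs in $\tilde O(\sqrt{\log\Delta}+\log\log n)$ rounds with $n^{\delta}$ local and $\tilde O(n+m)$ global memory; plugging in $\Delta\le\poly(\max(\alpha,\log n))$ gives $\tilde O(\sqrt{\log\alpha}+\log\log n)=O(\sqrt{\log\alpha}\log\log\alpha+\log\log n)$ rounds, which is exactly the claimed bound (the $\log\log n$ term absorbing the case $\alpha\le\poly(\log n)$). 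Hence the entire content of the theorem is to show that the degree reduction — committing to a partial maximal matching / maximal independent set so that the remaining graph has maximum degree $\poly(\max(\alpha,\log n))$ — can be done in only $O(\log\log n)$ MPC rounds, an exponential speed-up over the $O(\log^2\log n)$-round reduction underlying~\cite{BehneBDFHKU2019Massively}.

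For the degree reduction I would exploit two quantitative features of arboricity $\alpha$: every subgraph on $k$ vertices has at most $\alpha k$ edges, and at most $2\alpha k/t$ of its vertices have degree $\ge t$ (so large-degree vertices are scarce); consequently the graph admits a degeneracy / $H$-partition-style layering of depth $O(\log n)$ in which every vertex has $O(\alpha)$ neighbours in its own and higher layers, which can be prepared in MPC. I would then fix a geometrically shrinking schedule of thresholds $d_0=n>d_1>\dots>d_T$ with $d_{i+1}\approx\sqrt{d_i}$ and $d_T=\poly(\max(\alpha,\log n))$, so that $T=O(\log\log n)$, and maintain a residual graph $G_i$ of maximum degree at most $d_i$. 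Iteration $i$ spends only $O(1)$ rounds: sample vertices (or edges) at a rate $p_i$ tuned to $d_i$ so that the sampled subgraph decomposes into pieces small enough to fit on one machine — here crucially using that the sampled subgraph still has arboricity $\le\alpha$ and hence bounded edge count — solve each piece exactly, add the solution to the output, delete the matched / dominated vertices, and argue by the scarcity of high-degree vertices that with high probability every surviving vertex now has degree at most $d_{i+1}$. Summing over the schedule gives $O(T)=O(\log\log n)$ rounds.

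The main obstacle is precisely this $O(1)$-rounds-per-iteration claim. In~\cite{BehneBDFHKU2019Massively} each scale was handled by solving a low-degree subinstance via a \LOCAL subroutine sped up through graph exponentiation, costing $\Theta(\log\log n)$ rounds per scale (and, separately, long path-like low-degree components are expensive to resolve exactly), so one cannot afford to repeat this at each of the $\Theta(\log\log n)$ scales. Driving each scale down to $O(1)$ rounds forces a delicate balancing of the sampling rates $p_i$, the threshold schedule $d_i$, and the arboricity bound: the pieces handed to individual machines must fit in $n^{\delta}$ memory, the high-probability degree-drop argument must survive a union bound over all $T$ scales and all $\poly(n)$ vertices, and the total data moved across all scales must stay within $\tilde O(n+m)$ global memory — so no graph exponentiation and no storing of full high-degree neighbourhoods is permitted. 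The remaining steps are routine MPC bookkeeping: maintaining and re-balancing the residual graph under deletions via $O(1)$-round sorting/aggregation primitives, relaying identifiers, and finally invoking~\cite{GhaffU2019Sparsifying}. Maximal matching and maximal independent set are handled by the same template — peeling matched endpoints in one case, peeling closed neighbourhoods of the selected independent set in the other — with the arboricity bound being, in both cases, exactly what makes the residual degree collapse; the dependence on $\alpha$ in the final bound comes solely from the invocation of~\cite{GhaffU2019Sparsifying} on the $\poly(\max(\alpha,\log n))$-degree remainder.
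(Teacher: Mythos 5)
There is a genuine gap at the exact point your proposal flags as ``the main obstacle'' and then leaves unresolved. Your reduction of the theorem to a degree-reduction statement, and your handling of the final low-degree residual graph via Ghaffari--Uitto, match the paper and are fine. But the heart of the theorem is making the degree reduction run in $O(\log\log n)$ rounds total, and your proposed mechanism --- a schedule of $T=O(\log\log n)$ scales $d_{i+1}\approx\sqrt{d_i}$, each handled in $O(1)$ MPC rounds by sampling at rate $p_i$ so that the sampled subgraph ``decomposes into pieces small enough to fit on one machine,'' solving the pieces exactly, and arguing a whp degree drop --- is not substantiated and is unlikely to work in the strongly sublinear regime. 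To drive every surviving vertex's degree from $d_i$ down to $d_{i+1}$ you must match or dominate essentially all vertices of degree above $d_{i+1}$, and the only known way to do this with a consistent partial maximal matching/MIS is the \LOCAL routine of Barenboim et al., which needs $\Theta(\log_{d_i} n)$ repetitions per scale; with $n^{\delta}$ local memory this cannot be compressed to $O(1)$ MPC rounds without collecting $\Theta(\log_{d_i} n)$-hop neighborhoods, i.e.\ graph exponentiation, which is exactly the $\Theta(\log\log n)$-per-scale cost you are trying to avoid. The ``small sampled pieces solved exactly'' idea is the engine of the \emph{near-linear}-memory algorithms (where a global sample of $\tilde O(n)$ edges fits on one machine); with $n^{\delta}$ memory the connected components of any sample dense enough to certify the degree drop do not fit on single machines, and bounded arboricity does not rescue this (the lower bound of Ghaffari, Kuhn, and Uitto holds even on trees). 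So the $O(1)$-rounds-per-scale claim, on which your whole round count rests, has no supporting argument.

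For contrast, the paper does \emph{not} try to cheapen each scale: it keeps the Behnezhad et al.\ structure of $O(\log\log\Delta)$ phases, each simulated via graph exponentiation in $O(\log\log n)$ rounds, and instead \emph{pipelines} the phases, starting phase $\ell$ only $O(1)$ iterations after phase $\ell-1$ so that all phases overlap and the total is $O(\log\log\Delta)\cdot O(1)+O(\log\log n)=O(\log\log n)$. The technical work is then entirely about correctness and memory under this overlap: vertices whose earlier-phase outcome is unknown are treated as \emph{pending} (\cref{obs:pending}), the invariants of \cref{prop:lowarb:invariant} track how many \LOCAL executions are safely simulated, and \cref{lem:num-vert-H,lem:plj,lem:neighborhood} bound, by induction over phases, the number of pending/high-degree vertices and the sizes of the stored neighborhoods so that both the $n^{\delta}$ local and $\tilde O(n+m)$ global memory constraints hold. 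If you want to salvage your write-up, you would either need to supply a genuinely new $O(1)$-round-per-scale degree-reduction primitive (which would be a result beyond this paper), or switch to a pipelining argument of this kind and carry out the pending-vertex and memory analysis.
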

    We note that on graphs with polylogarithmic arboricity, our bound provides a quadratic improvement over the previous algorithms \cite{BehneDHK2018Massively,BrandFU2018Matching,BehneBDFHKU2019Massively}. 	Moreover, our algorithm for maximal matching is presumably optimal, as the $\Omega(\log \log n)$ conditional lower bound shown by Ghaffari, Kuhn, and Uitto \cite{GhaffKU2019Conditional} holds even on trees (where $\alpha=1$).
	
Our second contribution is providing a similarly fast algorithm for $4$-coloring trees. 	
\begin{theorem}
\label{thm:4-coloring}
There is an $O(\log\log n)$ round MPC algorithm using $n^{\delta}$ local memory and $O(n+m)$ global memory
	 that with high probability computes a $4$-coloring of any tree.
\end{theorem}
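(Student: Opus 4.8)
The plan is to reduce, in $O(\log\log n)$ MPC rounds, to $4$-colouring a forest of \emph{constant} maximum degree, and then to colour that forest by simulating a low-round \LOCAL\ colouring algorithm via graph exponentiation. Concretely, the first step is to run the degree-reduction procedure underlying \Cref{thm:lowarb:main} --- the one that reduces an arboricity-$\alpha$ graph to a remnant of maximum degree $\poly(\alpha)$ in $O(\log\log n)$ rounds, which for a tree leaves a remnant of \emph{constant} degree. What I actually need is a colouring-compatible variant of it: an execution that in $O(\log\log n)$ rounds outputs a sub-forest $T'\subseteq T$ with $\Delta(T')\le D$ for a constant $D=D(\delta)$, together with a proper partial $4$-colouring of $V(T)\setminus V(T')$, such that any proper $4$-colouring of $T'$ completes it to a proper $4$-colouring of all of $T$ by one further $O(1)$-round local patch. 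The natural implementation peels, round by round, the vertices whose current degree lies below a doubly-exponentially growing threshold $d_1<d_2<\cdots$: since a forest on $N$ vertices has at most $2N/d$ vertices of degree $\ge d$, the surviving set drops from $N_{i-1}$ to at most $2N_{i-1}/d_i$, so $O(\log\log n)$ peeling rounds suffice, and one orients edges from earlier to later peeling classes.

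For the second step, because $\Delta(T')\le D$ for a constant $D$ chosen small relative to $\delta$, the radius-$r$ ball of any vertex of $T'$ has at most $D^{r+1}$ vertices; taking $r=O(\log\log n)$ --- the round complexity of the randomized \LOCAL\ algorithm that properly colours bounded-degree trees with $O(1)$ colours --- this is only $\polylog(n)$, which fits comfortably in $n^{\delta}$ local memory and keeps the total memory near-linear. I would therefore perform $O(\log\log n)$ rounds of graph exponentiation --- in round $i$ each vertex of $T'$ learns its radius-$2^{i}$ neighbourhood --- until every vertex knows its radius-$r$ ball, and then have each vertex \emph{locally} simulate that \LOCAL\ algorithm and output its own colour; all machines agree since they run the same rule on overlapping balls. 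A proper $O(1)$-colouring of $T'$ results, which after a constant number of further local rounds (using the bounded degree and the fact that $T'$ is $1$-degenerate) can be trimmed to a $4$-colouring; composing with the patch from the first step then yields a proper $4$-colouring of $T$ in $O(\log\log n)$ MPC rounds with near-linear total memory.

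The step I expect to be the main obstacle is making the degree reduction colouring-compatible within the $O(\log\log n)$-round budget. Peeling slowly, into $O(\log n)$ classes each of out-degree $\le 2$, would render both the back-substitution and any palette trimming trivial, but is far too slow; the fast doubly-exponential peeling instead creates classes in which a vertex can have super-constantly many neighbours in later classes, so a colour committed to a later vertex need not leave room in a size-$4$ palette for an earlier one. The crux is to arrange the peeling --- and the order in which the removed vertices are coloured back --- so that, despite these large cross-class degrees, four colours always suffice on the remnant and the completion step is local; this is precisely where the argument must lean on the finer analysis developed for \Cref{thm:lowarb:main}, while everything else (the exponentiation and the simulation of the bounded-degree colouring algorithm) is routine.
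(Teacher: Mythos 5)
There is a genuine gap, and it sits exactly where your plan declares the work ``routine.'' Your second step assumes an $O(\log\log n)$-round randomized \LOCAL algorithm that properly colors bounded-degree trees with $O(1)$ colors, and proposes to simulate it after collecting radius-$O(\log\log n)$ balls by exponentiation. No such \LOCAL algorithm exists: $O(1)$-coloring of trees requires $\Omega(\log n)$ \LOCAL rounds, and this holds for randomized algorithms as well (a $t$-round whp algorithm for trees would properly $O(1)$-color bounded-degree graphs of girth $\Omega(t)$ with non-constant chromatic number). This is precisely the Linial-type lower bound that this paper invokes to get its conditional $\Omega(\log\log n)$ MPC lower bound for \cref{thm:4-coloring}; any approach whose last step is ``simulate a \LOCAL coloring algorithm on balls that fit in memory'' is therefore doomed, since one would need radius-$\Omega(\log n)$ balls, which neither the $n^{\delta}$ local memory nor the $O(n+m)$ global memory budget allows. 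Reducing to constant maximum degree does not help here at all, because the $\Omega(\log n)$ \LOCAL hardness already holds for degree-$3$ trees. Your first step has a second unresolved gap that you yourself flag: with doubly-exponential peeling thresholds, a removed vertex can have super-constantly many neighbors that get colored before it, so a $4$-color completion is not available, and nothing in the machinery of \cref{thm:lowarb:main} (which reduces degrees for MIS/matching via the Barenboim et al.\ subroutine) supplies a coloring-compatible fix.

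The paper's proof takes a different, non-\LOCAL-simulation route, and the contrast is instructive. It peels only degree-$\le 2$ vertices for $\Theta(\log\log n)$ iterations; this removes a constant fraction per iteration, shrinks the instance to $O(n/\log^2 n)$ vertices, and---crucially---guarantees that each peeled vertex has at most $2$ already-colored neighbors when the peeled layers are colored back in reverse order (scheduled by a Cole--Vishkin coloring of each layer), so the completion uses no extra colors. The surviving forest is then handled globally rather than locally: randomly split the vertices into $V_1,V_2$ so that every component of $T[V_1]$ and $T[V_2]$ has diameter $O(\log n)$ whp; root each component using $\Theta(\log^2 n)$ parallel runs of random edge deletions plus the connected-components algorithm of Behnezhad et al.\ (affordable only because the vertex count was first reduced to $n/\log^2 n$); compute each node's distance to its root by $O(\log\log n)$ steps of pointer jumping; $2$-color each forest by distance parity; and combine the two $2$-colorings into a $4$-coloring. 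The use of connectivity/rooting as a global primitive is what circumvents the \LOCAL lower bound, and it is the ingredient your proposal is missing.
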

This algorithm also matches a conditional lower bound of $\Omega(\log\log n)$, which follows from the conditional impossibility results of \cite{GhaffKU2019Conditional}, showing that strongly sublinear memory MPC algorithms cannot be more than exponentially fast compared to their LOCAL model counterpart, and the $\Omega(\log n)$ LOCAL-model lower bound for $O(1)$-coloring trees of Linial~\cite{linial}.

	\subsection{Preliminaries and Notations}
	Let $[n]$ denote the set $\{1,2,\dots,n\}$. Let $G=(V,E)$ be an undirected graph. 
	Let $\deg_G(u)$ denote the degree of a vertex $u$ in graph $G$, and let $\dis_G(u,v)$ denote the minimum number of edges in a path connecting $u$ and $v$ (when $G$ is clear from the context, we may also write $\deg(u)$ and  $\dis(u,v)$ for simplicity).  For a vertex subset $V' \subseteq V$, define $\dis_G(u,V')=\min_{v\in V'}\dis_G(u,v)$. 
	The \emph{$k$-hop neighborhood} of a vertex $v$ in $G$ is the subgraph containing all vertices $u$ satisfying $\dis_G(u,v)\le k$, together with every edge that lies on some path starting from $v$ with length at most $k$.
	The $k$-hop neighborhood of a vertex set $V'\subseteq V$ is the union of $k$-hop neighborhoods of $v\in V'$. We use $G[V']$ to denote the subgraph of $G$ induced by the vertex subset $V' \subseteq V$. For simplicity we use $\dis_{V'}(u,v)$ as a shorthand for $\dis_{G[V']}(u,v)$, when the underlying graph $G$ is clear from the context; similarly, we can define  $\deg_{V'}(u)$.
	
	The \emph{arboricity} $\alpha$ of a graph is the minimum number of forests into which its edges
can be partitioned.  Nash-Williams~\cite{Nash-1964Decomposition} showed that the arboricity is equal to the maximum value of $\lceil m_S/(n_S-1) \rceil$, where $m_S,n_S$ are the number of edges and vertices in any subgraph $S$.

\section{MIS and Matching}
In this section we prove \cref{thm:lowarb:main}. In \cref{sec:lowarb:overview}, we review the previous work by Behnezhad et al. and informally describe our new techniques for improving their algorithm. 
In \cref{sec:lowarb:defini}, we give formal definitions and describe the structure of our pipelined MPC algorithm.
Then in \cref{sec:2.3} and \cref{sec:lowarb:details-algo}, we present the implementation and loop invariants of our algorithm in detail.
In \cref{sec:lowarb:memory}, we analyze the space complexity of the MPC algorithm.

\subsection{Overview}
\label{sec:lowarb:overview}
\subparagraph*{Review of Behnezhad~et~al.'s algorithm}We first briefly review the main ideas of the previous MPC algorithm for Maximal Independent Set (MIS) and Maximal Matching (MM) by Behnezhad~et~al.\ \cite{BehneBDFHKU2019Massively, BehneDHK2018Massively}. Their result is based on the \LOCAL algorithm due to Barenboim~et~al.\ \cite{barenboim2016locality}. This LOCAL algorithm is formally stated in \cref{lem:local}. 
In a graph with small arboricity $\alpha$ and large maximum degree $\Delta \ge \poly(\alpha, \log n)$, the \LOCAL algorithm finds a matching (or an independent set) in $O(1)$ rounds and removes the involved vertices from the graph, so that the number of high-degree vertices (i.e., with degree $>\sqrt{\Delta}$) decreases by a factor of $\Delta$ in the remaining graph.

By repeating this \LOCAL algorithm $O(\log_{\Delta}n)$ times, all high-degree vertices are eliminated, that is, they either get removed from the graph or become low-degree. Thus, the remaining graph has maximum degree at most $\sqrt{\Delta}$. We call this procedure a \emph{phase}. 
After $O(\log \log \Delta)$ phases, the maximum degree drops below $\poly(\alpha,\log n)$, and we switch to the algorithm by Ghaffari and Uitto \cite{GhaffU2019Sparsifying} to find a maximal matching (or MIS) of the remaining low-degree graph.
Combining with the partial solution obtained in previous phases, we then obtain a maximal matching (or MIS) of the input graph.

To implement the algorithm efficiently in the MPC model, Behnezhad~et~al.\ use the, by-now standard, graph exponentiation technique~\cite{Lenzen2010brief, Ghaffari2017MISclique}:
For each vertex $v$, we store its $d$-hop neighborhood in one machine, with $d$ initially being $1$. In $O(1)$ MPC rounds we can double the radius to $2d$, by requesting and collecting the $d$-hop neighborhoods of all vertices $u$ in the $d$-hop neighborhood of $v$.
Hence in $O(\log d)$ MPC rounds we can collect the
 $d$-hop neighborhood of each vertex, which has at most $\Delta^d$ edges and can fit into the $n^{\delta}$ local memory when $d=\delta \log_{\Delta} n$. 
 Then we can simulate $O(\log_{\Delta} n)$ \LOCAL rounds in this phase using only $O(1)$ MPC rounds.
As each phase requires $O(\log d) \le O(\log \log n)$ MPC rounds for graph exponentiation, the total round-complexity of this MPC algorithm is $O(\log \log n \log \log \Delta) \le  O(\log^2 \log n)$.

The algorithm described above would need $\tilde O(n^{1+\delta}+m)$ global memory. In order to reduce it to $\tilde O(n+m)$, Behnezhad~et~al.\ used an additional idea: 
As our objective in a phase is to make high-degree vertices disappear, we can without loss of generality assume that the \LOCAL algorithm only removes vertices that are in the $O(1)$-hop neighborhood of the high-degree vertices, and other vertices are considered as irrelevant and do not participate in the LOCAL algorithm. 
As the number of high-degree vertices decreases after every execution of the \LOCAL algorithm, the number of relevant vertices also decreases quickly. Then, we have enough space for expanding the stored neighborhood of each relevant vertex.

\subparagraph*{New techniques}
Our new idea is a \emph{pipelining} technique:
as more vertices become irrelevant in the current phase, we can start running the next phase on these vertices, \emph{concurrently} with the current phase which has not necessarily finished completely.
In our algorithm, after starting phase $(\ell-1)$, we wait for $O(1)$ MPC rounds and then start phase $\ell$. 
There are $O(\log \log \Delta)$ phases in total, each taking $O(\log \log n)$ MPC rounds, so the round complexity is $O(\log \log \Delta)\cdot O(1) + O(\log \log n) = O(\log \log n)$ in total.
Our pipelined algorithm produces exactly the same result as the unpipelined version does (using the same random bits).

When running phase $\ell$, we need to deal with \emph{pending} vertices that have not finished their computation in previous phases $1,2,\dots,\ell-1$.
As it is still unknown whether they will be removed by the start of phase $\ell$, the messages sent from these vertices during phase $\ell$ are temporarily marked as \emph{pending}.
If a vertex receives a pending message, then its state also becomes pending, and so on. 
When simulating the LOCAL computation in phase $\ell$, we are only able to simulate the behaviour of non-pending vertices; only upon finishing the simulation of previous phases $1,2,\dots,\ell-1$ for some vertex, the initial state of this vertex in phase $\ell$ becomes known, which allows us to resume the phase $\ell$ computation around its neighborhood and remove the pending marks. 

Now we take a closer look at what happens when we perform graph exponentiation.
In the ideal scenario, after phase $\ell-1$ completely finishes, the maximum degree drops below $\Delta_{\ell-1}:=\Delta^{1/2^{\ell-1}}$, allowing us to store the $(\delta \log_{\Delta_{\ell-1}} n)$-hop neighborhood of a vertex into the local memory.
However, in the actual situation, when expanding the neighborhood we may encounter pending vertices, which may not have degree upper bounded by $\Delta_{\ell-1}$.
It would be problematic if we simply excluded these vertices when we expand the neighborhood, as they may later have degree reduced to the interval $(\Delta_{\ell}, \Delta_{\ell-1}]$ and become relevant in phase $\ell$. 
So we have to store them as well, and this poses challenges to bounding the local memory and the global memory of our algorithm---this is the most technical part of our analysis.

In this overview, we provide some intuition on how we analyze the global memory used for storing the neighborhoods. 
We will analyze the neighborhoods of non-pending vertices and of pending vertices separately.
In phase $\ell$, after repeating the \LOCAL algorithm $k$ times,  the number of non-pending vertices $v$ with degrees in $(\Delta_{\ell},\Delta_{\ell-1}]$ can be bounded by $n/\Delta_{\ell-1}^k$.
The $d$-hop neighborhood (we will maintain $d=ck$ for some constant $c\in (0,1)$) of such a non-pending vertex $v$  only contains vertices $u$ with degree at most $\Delta_{\ell-1}$, since otherwise $u$ would still be waiting for phase $\ell-1$ to finish, causing $v$ to become a pending vertex.
Hence the total size of $d$-hop neighborhoods of these non-pending vertices $v$ is roughly at most $(n/\Delta_{\ell-1}^k)\cdot \Delta_{\ell-1}^d<n$.
Now we consider the pending vertices. For any pending vertex $v$, it must be within $k$ distance from a vertex $u$ which is currently stuck at phase $\ell-q$, for some $q\ge 1$.
We pick such $u$ with the biggest $q$, so $v$ can be reached from $u$ by a $k$-step path on which all vertices have degree $\le \Delta_{\ell-q}$ (since if any of them had degree greater than $\Delta_{\ell-q}$, it would be stuck at a phase earlier than $\ell-q$, contradicting the maximality of $q$).
Hence we can bound the number of pending vertices $v$ by $\sum_{q\ge 1} H_{\ell-q} \cdot \Delta_{\ell-q}^{k}$, where $H_{\ell-q}$ denotes the number of vertices currently stuck at phase $\ell - q$, and the total size of neighborhoods of these $v$ can be bounded by a similar summation. In order to show a near-linear upper bound, we want to make this summation dominated by a geometric series.  To do this, we use a similar argument to show an upper bound on $H_{\ell-q}$, and hence after setting appropriate parameters, we can use \emph{induction on $\ell$} to establish the desired memory bound.

	\subsection{Definitions and algorithm structure}
	\label{sec:lowarb:defini}
	
	Suppose each machine in the MPC model has $n^\delta$ local memory, for some constant  $0<\delta<1$. 
	We will use the following \LOCAL algorithm \cite{barenboim2016locality} as a black box.
	
	\begin{lemma}[{\cite[Theorem~7.2]{barenboim2016locality}}, restated\footnote{{The proof of {\cite[Theorem~7.2]{barenboim2016locality}} presented an $O(1)$-round algorithm that reduces the number of vertices with degree greater than $t\alpha$ by a $t^{1/7}$ factor, for any parameter $t\ge \poly(\alpha,\log n)$. By choosing $t = \sqrt{\Delta}/\alpha$ and repeating the algorithm $O(1)$ times we obtain the claimed statement.}}; see also {\cite[Section~5.2]{BehneDHK2018Massively}}]
	\label{lem:local}
	Let $\alpha$ and $\Delta$ be parameters satisfying $\Delta \ge \poly(\alpha, \log n)$.
	Suppose the input graph $G$ has arboricity at most $\alpha$ and maximum degree at most $\Delta$.
Let $H_G:=\{v: \deg_G(v)> \sqrt{\Delta}\}$ denote the set of high-degree vertices in graph $G$.
	
	There is a \LOCAL algorithm of round complexity $r=O(1)$ that computes a matching $M$ (or an independent set $I$) of $G$, such that  $|H_{G'}|\le |H_G|/\Delta$ w.h.p., where $G'$ is the induced subgraph of $G$ where matched vertices in $M$ (or vertices in $I$ and their neighbors) are removed. 
	Moreover, the algorithm has the following properties.
	\begin{enumerate}[(1)]
	    \item  The state description and random bits for each vertex, as well as communication along every edge in each round, have length at most $\polylog(n)$ bits. The space usage for computing the new state of vertex $v$ is $\deg_G(v)\cdot \polylog(n)$, and the space usage for computing the message sent along each edge is $\polylog(n)$.
	    \label{property1}
	    \item Let $H^+$ denote the $r$-hop neighborhood of $H_G$ in $G$.  Then vertices outside $H^+$ are not affected by the algorithm (i.e., they will stay in graph $G'$, and their degrees in $G'$ are unchanged).
	    \label{property2}
	\end{enumerate}
	\end{lemma}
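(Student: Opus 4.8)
The plan is to derive the lemma from \cite[Theorem~7.2]{barenboim2016locality} (see also \cite[Section~5.2]{BehneDHK2018Massively}) via a parameter substitution, a constant number of repetitions, and a localization step, essentially as indicated in the footnote. The starting guarantee is the following: for every parameter $t\ge\poly(\alpha,\log n)$ (for some fixed polynomial), there is an $O(1)$-round \LOCAL algorithm that computes a matching (or an independent set) of a graph of arboricity at most $\alpha$ such that, with high probability, the number of vertices of degree greater than $t\alpha$ drops by a factor of $t^{1/7}$ once the matched vertices (resp.\ the selected vertices together with their neighbors) are deleted.

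First I would set $t:=\sqrt{\Delta}/\alpha$. Choosing the hidden polynomial in the hypothesis $\Delta\ge\poly(\alpha,\log n)$ large enough, one simultaneously gets $t=\sqrt{\Delta}/\alpha\ge\poly(\alpha,\log n)$ (so the black box applies) and, say, $\Delta\ge\alpha^{4}$, whence $t\ge\Delta^{1/4}$. Since $t\alpha=\sqrt{\Delta}$, a single invocation reduces $|H_G|$ by a factor $t^{1/7}\ge\Delta^{1/28}$. Deleting vertices never increases the arboricity and only lowers degrees, and we keep the threshold $\sqrt{\Delta}$ fixed throughout, so the black box can be reapplied to the remaining graph; after a fixed number $c=O(1)$ of invocations (concretely $c=28$ suffices for the stated bounds) the cumulative reduction factor is $t^{c/7}\ge\Delta$, and a union bound over the $c$ invocations yields $|H_{G'}|\le|H_G|/\Delta$ with high probability.

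Next I would verify the two extra properties. Property~\ref{property1} is read off directly from the algorithm of \cite{barenboim2016locality}: in each of the $O(1)$ rounds a vertex $v$ updates its state by scanning its $\deg_G(v)$ incident edges in $\deg_G(v)\cdot\polylog(n)$ space, with states, random bits, and per-edge messages of length $\polylog(n)$; chaining $O(1)$ rounds preserves this up to constant factors. For Property~\ref{property2} I would use that the degree drop is driven locally: a high-degree vertex $v$ falls below $\sqrt{\Delta}$ only because sufficiently many vertices within $O(1)$ hops of $v$ get removed, and whether that occurs depends only on the random bits and states in an $O(1)$-radius ball around $v$. Hence one may restrict the algorithm so that only vertices within that $O(1)$ radius of $H_G$ participate, without changing the guarantee on $|H_{G'}|$; taking $r$ equal to this radius plus a small additive constant (to absorb the neighbors of removed vertices, and the extra hop incurred in the independent-set variant where the neighbors of selected vertices are deleted too), every vertex outside the $r$-hop neighborhood $H^+$ of $H_G$ is neither deleted nor loses an incident edge, so it stays in $G'$ with unchanged degree.

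The step I expect to be the main obstacle is the localization behind Property~\ref{property2}: one has to inspect the algorithm of \cite{barenboim2016locality} and confirm that its analysis of the degree drop of a high-degree vertex appeals only to events inside an $O(1)$-radius ball around that vertex---which it does, since the algorithm repeatedly matches (resp.\ selects into the independent set) low-degree neighbors of high-degree vertices, and each high-degree vertex thereby loses a constant fraction of its neighbors. A secondary, purely bookkeeping point is fixing the polynomial in $\Delta\ge\poly(\alpha,\log n)$ large enough that both $t=\sqrt{\Delta}/\alpha\ge\poly(\alpha,\log n)$ and $t^{c/7}\ge\Delta$ hold with $c$ an absolute constant; this is routine once the exponents are tracked.
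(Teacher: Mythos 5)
Your proposal is correct and follows essentially the same route as the paper, whose entire argument is the footnote: instantiate \cite[Theorem~7.2]{barenboim2016locality} with $t=\sqrt{\Delta}/\alpha$ and repeat the $O(1)$-round reduction a constant number of times to amplify the $t^{1/7}$ drop to a factor of $\Delta$, with Properties~(1) and~(2) read off from the black-box algorithm as in \cite[Section~5.2]{BehneDHK2018Massively}. Your extra bookkeeping (e.g.\ $\Delta\ge\alpha^4$ so $t^{1/7}\ge\Delta^{1/28}$, and padding the radius $r$ by an additive constant for the localization) is consistent with what the paper leaves implicit.
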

Let  $\Delta_\ell := \Delta^{1/2^{\ell}}$, where $\Delta$ is the maximum degree of the input graph.
The unpipelined algorithm sequentially runs $L = O(\log \log \Delta)$ \emph{phases}, where phase $\ell$ ($\ell=1,2,\dots,L$) repeats the \LOCAL algorithm $O(\log_{\Delta_{\ell}} n)$ times with degree parameter $\Delta_{\ell-1}$, and reduces the maximum degree from $\Delta_{\ell-1}$ to $\Delta_{\ell}$.


Our pipelined MPC algorithm runs in $O(\log \log n)$ \emph{iterations} (starting from iteration 1), each taking $O(1)$ MPC rounds. 
In each iteration, there are multiple \emph{phases} concurrently being simulated by our MPC algorithm.
There is a small \emph{lag} $t=O(1)$ between the start of phase $\ell$ and $\ell+1$, i.e.,  the simulation of phase $\ell$ starts at iteration $j=(\ell-1) t +1$, by running $\textsc{Initialize}(\ell)$.  In each  iteration $j$, we perform $\textsc{Simulate}(\ell,j),\textsc{Update}$, and $\textsc{Expand}(\ell,j)$, concurrently for all active phases $\ell$ (i.e, phases that have already started).
 See \cref{algo:main} for the algorithm structure. 
 The value of parameter $t$ will be determined in \cref{sec:lowarb:memory}.
		\begin{algorithm}[H]
		\small
			\caption{Structure of our pipelined MPC algorithm}
		\label{algo:main}
		\begin{algorithmic}
		\State Number of phases $L = O(\log \log \Delta)$. 
		\State Lag parameter $t=O(1)$.
		\For {iteration $j\gets 1,2,\dots, O(\log \log n)$}
	        \State $L_{\mathsf{active}} \gets \min\{L, \lceil (j-1)/t\rceil \}$ \textit{(Number of phases that have already started)}
		    \If{$j = (\ell - 1)t+1$ for some $\ell \in [L]$}
		        \State $\textsc{Initialize}(\ell)$ \textit{(Phase $\ell$ starts in this iteration)}
	        \EndIf
            \State $\textsc{Simulate}(\ell,j)$  for all $\ell \in [L_{\mathsf{active}}]$ concurrently
            \State $\textsc{Update}$
            \State $\textsc{Expand}(\ell,j)$  for all $\ell \in [L_{\mathsf{active}}]$ concurrently
	    \EndFor
	    \State Switch to Ghaffari and Uitto's algorithm~\cite{GhaffU2019Sparsifying}
	    \end{algorithmic}
	\end{algorithm}
Our MPC algorithm maintains the subgraph induced by the remaining vertices. After iteration $j$ finishes, the remaining induced subgraph is denoted by $G(j)$. We have $G = G(0)\supseteq G(1) \supseteq \cdots$. 
In iteration $j$, concurrently for each active phase $\ell$,  $\textsc{Simulate}(\ell,j)$ simulates (part of) the \LOCAL computation in phase $\ell$, and removes the matched vertices (or vertices in the independent set and their neighbors) from the graph.
After removing the vertices, in $\textsc{Update}$ we compute the degrees of vertices in the remaining graph $G(j)$, and update some other information.
Then, in $\textsc{Expand}(\ell,j)$, we perform one graph exponentiation step and double the radius of the neighborhoods stored in memory.

\subsection{LOCAL simulation with incomplete information}
\label{sec:2.3}
As discussed in \cref{sec:lowarb:overview}, our pipelining algorithm is complicated by the presence of pending vertices. In this section we explain how to deal with them properly when simulating LOCAL algorithms.

When we simulate a LOCAL algorithm $A$ on an input graph $G$, we may not have the complete information of the initial states of vertices.
We say vertex $v$ is \emph{initially pending} if $v$'s initial state is unknown.\footnote{ Suppose LOCAL algorithm $A$ refers to the computation in phase $\ell$. Here, the \emph{state} of a vertex $v$ includes its degree, whether it has been removed, and some other information required by \cref{lem:local} Property (\ref{property1}).  If we have not finished the phase $(\ell-1)$ simulation for vertex $v$, then we do not know $v$'s state at the beginning of phase $\ell$, and we will consider $v$ to be in a special \emph{pending state}.}
We have the following simple observation:
\begin{observation}
\label{obs:pending}
Define a LOCAL algorithm $A'$, which is the same as $A$ except for the following additional rule: the messages sent from a pending vertex are marked as pending, and a vertex receiving a pending message becomes pending.

If $v$'s $R$-hop neighborhood in the input graph $G$ does not contain an initially pending vertex, then $v$'s state after running algorithm $A'$ for $R$ rounds is non-pending, and is the same as its state in algorithm $A$ after $R$ rounds.
\end{observation}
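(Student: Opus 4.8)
The plan is to prove \cref{obs:pending} by a straightforward induction on the number of rounds $R$, strengthening the statement slightly so that the induction carries through. Specifically, I would prove the following for all $r \le R$ simultaneously: for every vertex $w$ whose $r$-hop neighborhood in $G$ contains no initially pending vertex, after running $A'$ for $r$ rounds the state of $w$ is non-pending and equals the state of $w$ after running $A$ for $r$ rounds; moreover, every message sent by $w$ in round $r$ of $A'$ is non-pending and equals the corresponding message in $A$. The reason to fold the messages into the invariant is that a vertex's state update in round $r+1$ depends on the round-$(r+1)$ messages received from its neighbors, so I need to control those messages, not just neighbor states.

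The base case $r=0$ is immediate: "no initially pending vertex in the $0$-hop neighborhood" means $w$ itself is not initially pending, so by definition its initial state is known and identical in $A$ and $A'$, and $A'$ only differs from $A$ in how it propagates the pending mark, which has not yet been triggered. For the inductive step, suppose the claim holds for $r$ and let $w$ have no initially pending vertex within distance $r+1$. Then every neighbor $u$ of $w$ (and $w$ itself) has no initially pending vertex within distance $r$, so by the induction hypothesis each such $u$ has a non-pending state at the end of round $r$ equal to its state in $A$, and the message $u$ sends to $w$ in round $r+1$ is non-pending and equals the $A$-message. Since $w$'s new state in $A'$ is computed from $w$'s round-$r$ state (non-pending, equal to $A$'s) and the incoming round-$(r+1)$ messages (all non-pending, equal to $A$'s), and since the only way $w$ could become pending in $A'$ is by receiving a pending message or having been pending, $w$ remains non-pending and its round-$(r+1)$ state agrees with $A$. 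The same computation shows $w$'s outgoing round-$(r+1)$ messages match those of $A$ and are non-pending. Taking $r = R$ and $w = v$ yields the observation.

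The only real subtlety — and the one place I would be slightly careful — is making sure the distance bookkeeping is tight: a vertex within distance $r+1$ of $w$ is within distance $r$ of \emph{some} neighbor of $w$ (or is $w$ itself), which is exactly what lets me apply the induction hypothesis to all of $w$'s neighbors. This is elementary but it is the crux, so I would state it explicitly. Everything else is bookkeeping about the definition of $A'$. I do not anticipate any genuine obstacle; the statement is deliberately phrased as an "observation" precisely because the proof is a one-paragraph induction once the invariant is correctly strengthened to include outgoing messages.
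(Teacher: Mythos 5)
Your proof is correct; the paper itself gives no argument for \cref{obs:pending} (it is stated as a ``simple observation''), and your round-by-round induction with the invariant strengthened to cover outgoing messages is exactly the routine argument the paper implicitly relies on. The only cosmetic point is the indexing of the message part of the invariant (messages sent in round $r+1$ are determined by the state at the end of round $r$, so that clause is essentially subsumed by the state clause), which does not affect correctness.
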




Now we introduce another simple lemma which will be helpful in reducing the space usage of our MPC simulation.
Consider a LOCAL algorithm $A$ in which vertices may get eliminated during execution: after a vertex is eliminated, it no longer participates in later rounds of the algorithm. 
It is well-known that we can simulate the behaviour of a vertex $v$ up to $R$ LOCAL rounds provided that we know $v$'s $R$-hop neighborhood in the input graph $G=(V,E)$.
However, if we somehow know a subset $U$ of vertices that will be eliminated by $A$, then we only need to collect $v$'s $R$-hop neighborhood in the induced subgraph $G[V\setminus U]$ (plus the messages sent from $U$ when they are alive), which could be much smaller than its $R$-hop neighborhood in $G$. 
\begin{lemma}
\label{lemma:died-simulation}
Suppose a LOCAL algorithm $A$ runs on an input graph $G=(V,E)$, and $U_0\subseteq V$ is the set of vertices that are eliminated during the execution of $A$. 

Let $U \subseteq U_0$, and $v\in V\setminus U$. Then, we can simulate the behaviour of $v$ in the first $R$ rounds of the algorithm $A$, provided that we have the information of:
\begin{enumerate}[(a)]
    \item The $R$-hop neighborhood of $v$ in the induced subgraph $G[V\setminus U]$.  \label{item:neighbor1}
    \item The complete communication history along every edge $(u,w)\in E$ where $u\in U$ and $w$ belongs to the neighborhood defined in (\ref{item:neighbor1}), until $u$ gets eliminated by $A$.
\end{enumerate}
\end{lemma}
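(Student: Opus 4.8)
The plan is to argue by induction on the round number $r \le R$ that, given the information in (\ref{item:neighbor1}) and (b), we can compute the state of every vertex $w$ in the $(R-r)$-hop neighborhood of $v$ in $G[V\setminus U]$ at the end of round $r$, together with (when $w\in V\setminus U$) all messages $w$ has sent so far. The base case $r=0$ is immediate: the initial state of each vertex in the relevant neighborhood is part of the input graph data, and for vertices in $U$ their entire relevant communication history, including the initial round, is given by (b). For the inductive step, fix a vertex $w$ in the $(R-r)$-hop neighborhood; to compute its state after round $r$ we need its state after round $r-1$ (available by induction, since it lies in the $(R-r+1)$-hop neighborhood) together with every message it received in round $r$ from its neighbors in $G$. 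Those neighbors split into two groups: neighbors lying in $V\setminus U$, which are necessarily within the $(R-r+1)$-hop neighborhood in $G[V\setminus U]$ and whose round-$(r-1)$ state -- hence whose round-$r$ outgoing messages -- are known by induction; and neighbors lying in $U$, whose messages along the edge $(u,w)$ up until $u$'s elimination are supplied verbatim by~(b). After $u$ is eliminated it sends nothing, so no further information about $u$ is needed. This gives $w$'s state after round $r$ and its round-$r$ message, completing the induction; taking $r=R$ and $w=v$ yields the behaviour of $v$ in the first $R$ rounds.

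The one subtlety to get right is that the execution of $A'$ -- meaning the run of $A$ restricted to the surviving vertices -- coincides with the true execution of $A$ as far as the neighborhood of $v$ is concerned. Here I would use that $U\subseteq U_0$, the set of vertices actually eliminated by $A$: a vertex $u\in U$ behaves, from the viewpoint of its surviving neighbors, exactly as it did in the real run (it sends its real messages, then disappears), and a vertex $w\in V\setminus U$ within the neighborhood is never eliminated within the first $R$ rounds along any path we track, so its transition function is applied to exactly the same multiset of incoming messages as in the real run. Formally one checks inductively that the simulated round-$r$ state of each $w$ in the $(R-r)$-hop neighborhood equals its true round-$r$ state in $A$; the edges leaving $U$ are handled by~(b) precisely because the real communication history along those edges is what~(b) records.

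The main obstacle is purely bookkeeping: one must be careful that ``$R$-hop neighborhood in $G[V\setminus U]$'' shrinks by one hop per round in lockstep with the information actually needed, and that the edges from $U$ we invoke in round $r$ indeed have their $V\setminus U$ endpoint inside the $(R-r)$-hop neighborhood (so that (b) covers them). This follows because any such edge $(u,w)$ with $w$ in the $(R-r)$-hop neighborhood is exactly an edge counted in item~(b)'s description, and the history needed is only up to $u$'s elimination time, which is all~(b) promises. No genuine difficulty arises beyond tracking these indices correctly, so the proof is short.
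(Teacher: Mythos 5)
Your induction on the round number with neighborhoods shrinking by one hop per round is exactly the standard argument the paper has in mind; the paper simply declares the proof trivial and omits it, and your write-up supplies that argument correctly. One small caveat: a vertex of $V\setminus U$ in the tracked neighborhood may itself lie in $U_0$ and be eliminated within the first $R$ rounds, but this does not harm the induction because its elimination is determined by its simulated state and incoming messages, so your parenthetical claim that such vertices are never eliminated is unnecessary rather than something the proof relies on.
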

The proof of this lemma is trivial.
\subsection{Algorithm in detail}
\label{sec:lowarb:details-algo}

Recall that each machine in the MPC model has $n^\delta$ memory. Define $s:=\lceil 10/\delta \rceil $.
Recall that $r=O(1)$ is the round complexity of the LOCAL algorithm from \Cref{lem:local}, and $G(j)$ is the graph induced by the remaining vertices at the end of iteration $j$.
Let $H_\ell(j)$ denote the set of vertices $v$ with $\deg_{G(j)}(v) > \Delta_{\ell}$, and let $H^+_\ell(j)$ be the $r$-hop neighborhood of $H_\ell(j)$ in graph $G(j)$.
From \cref{lem:local} (Property (\ref{property2})), we observe that vertices outside $H_{\ell}^+(j)$ are not affected by phase $1,2,\dots,\ell$ in the following iterations $j+1,j+2,\dots$.


  At the end of iteration $j$, we maintain the following invariants for all active phases $\ell$. 
\begin{property}
\label{prop:lowarb:invariant}
  Let $j=(\ell-1) t+i$ $(i\ge 1)$, i.e., $j$ is the $i$-th iteration since the start of phase $\ell$.   Let $d_i:= 2^i  r$, and recall that $s: =\lceil 10/\delta \rceil$. At the end of iteration $j$, the following hold:
\begin{enumerate}
    \item {\bf (Number of finished \LOCAL executions)} 
    \label{item1}
    For every $v\in  H^+_\ell(j-1)$, if there is no $u\in H^+_{\ell-1} (j-1)$ satisfying $\dis_{H^+_{\ell}(j-1)}(u,v)\le sd_i\cdot r$, then we have computed $v$'s state  after $sd_i$ executions of the \LOCAL algorithm (from \Cref{lem:local}) in phase $\ell$.  
    \item {\bf (Radius of collected neighborhoods)} For every $v\in H^+_\ell(j)$, we have collected the following information into one machine:
    \label{item2}
    \begin{alphaenumerate}
        \item The $d_i$-hop neighborhood of $v$ in graph $H^+_\ell(j)$. \label{item:neighbor2}
        \item The communication history during phase $\ell$ along every edge $(u,w)$,  where vertex $u\notin H^+_{\ell}(j)$ was eliminated during phase $\ell$, and $w$ belongs to the neighborhood defined in (\hyperref[item:neighbor2]{a}).
    \end{alphaenumerate}
\end{enumerate}
\end{property}
Informally, \Cref{item1} says that we have simulated $sd_i$ executions of the \LOCAL algorithm (each using $r$ rounds) in phases $\ell$, as long as the $sd_i\cdot r$-hop neighborhood contained no initially pending vertices when iteration $j$ began.

	Now we describe how to implement $\textsc{Initialize}(\ell),\textsc{Simulate}(\ell,j),\textsc{Update},\textsc{Expand}(\ell,j)$ in iteration $j=(\ell-1)t+i$ using $O(1)$ MPC rounds and maintain \Cref{prop:lowarb:invariant} (the purpose of these procedures has been informally described at the end of \cref{sec:lowarb:defini}).
	
	\begin{lemma}
	Assume \cref{prop:lowarb:invariant} holds at the end of iteration $j-1$.  We can implement $\textsc{Simulate}(\ell,j),\textsc{Update},\textsc{Expand}(\ell,j)$ (concurrently for all phases $\ell\in [L_{\sf active}]$, see 4th line in \cref{algo:main}) using $O(1)$ MPC rounds so that at the end of iteration $j$, \Cref{prop:lowarb:invariant} holds for all phases $\ell \in [L_{\sf active}]$.
	\end{lemma}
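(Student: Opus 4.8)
The plan is to treat the claim as the inductive step of a loop-invariant maintenance argument. First I would give $O(1)$-round implementations of the three routines, run in parallel over all active phases $\ell$: once the current graph $G(j-1)$ is fixed, $\textsc{Simulate}$, $\textsc{Update}$ and $\textsc{Expand}$ only call standard MPC primitives (sorting, aggregation, a single round of graph exponentiation, and local computation), each of which costs $O(1)$ rounds regardless of how many phases are active, provided the data per machine stays within the $n^\delta$ budget (verified in \Cref{sec:lowarb:memory}); the \emph{pending} marks of \Cref{obs:pending} guarantee that at each iteration every surviving vertex is acted on by exactly one phase, so the concurrent executions are consistent. Throughout I would use that $H^+_{\ell-1}(j')\subseteq H^+_\ell(j')$ because $\Delta_\ell<\Delta_{\ell-1}$, that $G(j)\subseteq G(j-1)$ so $H^+_\ell(j)$ is obtained from $H^+_\ell(j-1)$ by deleting vertices, and that distances only increase under vertex deletion. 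The base case $i=1$ (where $j=(\ell-1)t+1$) is handled by $\textsc{Initialize}(\ell)$, which builds phase $\ell$'s bookkeeping from scratch; since $d_1$ and the execution count $sd_1$ are both $O(1)$, this is a bounded number of exponentiation steps. So below I take $i\ge 2$ and assume \Cref{prop:lowarb:invariant} at the end of iteration $j-1$ (index $i-1$ of phase $\ell$, index $i-1+t$ of phase $\ell-1$).

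For $\textsc{Simulate}(\ell,j)$: by item~\ref{item2} of the hypothesis every $v\in H^+_\ell(j-1)$ already holds its $d_{i-1}$-hop neighborhood in $H^+_\ell(j-1)$ together with the phase-$\ell$ message log of the incident edges to vertices deleted earlier in phase $\ell$, and by item~\ref{item1} those $v$ whose $sd_{i-1}r$-ball in $H^+_\ell(j-2)$ avoided $H^+_{\ell-1}(j-2)$ also hold $v$'s state after $sd_{i-1}$ executions of the \LOCAL algorithm of \Cref{lem:local} in phase $\ell$. I would then (i) enlarge these neighborhoods by $O(1)$ further exponentiation steps to the radius needed to continue the simulation for another $sd_i-sd_{i-1}=sd_{i-1}$ executions — affordable because, by \Cref{lemma:died-simulation}, the already-deleted vertices in the ball need only be represented by their logs, while the remaining live part of the current graph is sparse off $H^+_{\ell-1}$ (degrees $\le\Delta_{\ell-1}$, and mostly $\le\Delta_\ell$ after the executions already performed) — and (ii) run the simulation locally inside each machine. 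Correctness is \Cref{obs:pending} with the ``initially pending'' vertices being exactly those whose phase-$(\ell-1)$ computation is not yet finished: by \Cref{lem:local} (Property~(\ref{property2})) every vertex outside $H^+_{\ell-1}(j-1)$ has already reached its final phase-$(\ell-1)$ state, so $H^+_{\ell-1}(j-1)$ over-approximates the pending set and avoiding it within $sd_ir$ hops is the right condition, while \Cref{lemma:died-simulation} takes care of the vertices deleted within phase $\ell$ via the stored logs. Finally I would delete the matched vertices (resp.\ the independent-set vertices and their neighbors) the simulation commits to, recording the phase-$\ell$ messages they sent while alive; this passes $G(j-1)$ to the graph on which $\textsc{Update}$ acts. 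Everything is $O(1)$ rounds of exponentiation and local work, and by \Cref{lem:local} (Property~(\ref{property1})) the per-vertex states and messages are $\polylog(n)$ bits and the compute space is $\deg\cdot\polylog(n)$, so loads behave.

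For $\textsc{Update}$ and $\textsc{Expand}(\ell,j)$: $\textsc{Update}$ recomputes with $O(1)$ sort/aggregate rounds the degree of every surviving vertex in $G(j)$, hence $H_\ell(j)$ and its $r$-hop closure $H^+_\ell(j)$ for every active $\ell$; since $H^+_\ell(j)\subseteq H^+_\ell(j-1)$ as vertex sets, each machine just restricts its stored $d_{i-1}$-hop neighborhood of $v$ to $H^+_\ell(j)$ and discards log entries no longer incident to the retained part. $\textsc{Expand}(\ell,j)$ then performs one graph-exponentiation step inside $H^+_\ell(j)$, doubling the stored radius to $d_i=2d_{i-1}$ and extending the logs to the newly exposed incident deleted edges — which is precisely item~\ref{item2} at index $i$. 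For item~\ref{item1}, $\textsc{Simulate}$ has advanced the phase-$\ell$ computation to $sd_i$ executions for every $v\in H^+_\ell(j-1)$ whose $sd_ir$-ball in $H^+_\ell(j-1)$ avoids $H^+_{\ell-1}(j-1)$; passing from this ``$j-1$'' ball to the ``$j$'' ball in the statement uses only that distances grew, and the lag $t\ge 1$ ensures the phase-$(\ell-1)$ information from iteration $j-1$ was already present. Hence \Cref{prop:lowarb:invariant} is restored at the end of iteration $j$ for all active $\ell$.

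The real difficulty, I expect, sits entirely in the verification of item~\ref{item1}: one must show, uniformly over all phases $\ell$, that ``$v$ is initially pending for phase $\ell$'' follows from ``$v$ is within $sd_ir$ hops of $H^+_{\ell-1}$ in $H^+_\ell$'', and that the three quantities that double each iteration — the execution count $sd_i$, the stored radius $d_i$, and the pending-distance threshold $sd_ir$ — stay exactly synchronized so the induction on $i$ closes with no slack; entangled with this is the need that the temporary enlargement of neighborhoods inside $\textsc{Simulate}$ really fits the $n^\delta$ local memory, which is exactly what \Cref{sec:lowarb:memory} handles. The rest is essentially routine MPC bookkeeping.
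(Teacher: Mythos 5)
Your overall framing (induction on the invariant, \cref{obs:pending} with $H^+_{\ell-1}(j-1)$ over-approximating the pending set, \cref{lemma:died-simulation} with stored message logs for vertices eliminated during phase $\ell$, one exponentiation step in $\textsc{Expand}$, degree recomputation in $\textsc{Update}$) matches the paper. But step (i) of your $\textsc{Simulate}$ implementation contains a genuine gap: you propose to \emph{enlarge the stored neighborhoods by $O(1)$ further exponentiation steps} so that the remaining $sd_i\cdot r$ LOCAL rounds can be simulated in one local shot. To simulate $sd_i\cdot r = 2sr\cdot d_{i-1}$ LOCAL rounds locally you need a ball of radius $\Theta(sr\, d_i)$, i.e.\ a constant factor $sr$ larger than the radius $d_i$ that \cref{prop:lowarb:invariant} \cref{item2} maintains. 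The memory analysis of \cref{sec:lowarb:memory} has no slack for this: \cref{lem:neighborhood} and the local-memory theorem bound a radius-$d_i$ neighborhood by roughly $\Delta_{\ell-q-1}^{d_i+2}\le n^{(d_i+2)/((s-1)d_i)}\ll n^\delta$, an exponent of about $1/(s-1)\le \delta/9$; replacing $d_i$ by $sr\,d_i$ in the exponent gives about $n^{sr/(s-1)}\ge n^{r}$, which exceeds $n^\delta$ (and the global memory bound breaks the same way). Your parenthetical justification ("the live part is sparse off $H^+_{\ell-1}$, mostly $\le \Delta_\ell$ after the executions already performed, and \cref{sec:lowarb:memory} handles it") does not repair this, because that section only ever controls radius-$d_i$ neighborhoods and its counting permits up to $\approx n/\Delta_{\ell-1}^{(s-1)d_i}$ vertices whose balls have degree $\Delta_{\ell-1}$ throughout.

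The missing idea is the \emph{blind coordination} technique the paper uses (borrowed from Behnezhad et al.): the stored radius is never increased inside $\textsc{Simulate}$. Instead, each $v\in H^+_\ell(j-1)$ uses its stored $d_{i-1}$-hop neighborhood (plus the logs of \cref{lemma:died-simulation}) to advance the simulation by $d_{i-1}$ LOCAL rounds locally; then $O(1)$ MPC rounds are spent refreshing the current states of all vertices appearing in the stored neighborhoods; then another $d_{i-1}$ LOCAL rounds are simulated, and so on, for $O(sd_i\cdot r/d_{i-1})=O(sr)=O(1)$ repetitions. This decouples the number of simulated LOCAL rounds ($sd_i\cdot r$) from the stored radius ($d_{i-1}$, doubled to $d_i$ only afterwards in $\textsc{Expand}$), which is exactly what lets the invariant's parameters ($sd_i$ executions versus radius $d_i$, with $s=\lceil 10/\delta\rceil$) coexist with the $n^\delta$ local-memory constraint. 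Without this replacement for your step (i), the round count is fine but the memory accounting of the whole algorithm fails. (A minor further inaccuracy: the pending marks do not ensure each vertex is "acted on by exactly one phase" per iteration; a vertex may appear in $H^+_\ell(j)$ for several active $\ell$ simultaneously, and the procedures for different phases simply run concurrently and independently.)
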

	\begin{proof}
	 Let $j=(\ell-1)t+1$. Recall that $d_i = 2^i r$. In the following algorithmic description we omit the low-level implementation details of sublinear-memory MPC model. We will briefly address them in Remark~\ref{remark:mpc}.
\begin{itemize}
\item $\textsc{Simulate}(\ell,j)$.\,
We apply \Cref{obs:pending}, where $A$ is the phase $\ell$ algorithm, and vertices in $H_{\ell-1}^+(j-1)$ are initially pending.
To maintain \cref{prop:lowarb:invariant} \cref{item1}, it suffices to simulate algorithm $A'$ (defined in \Cref{obs:pending}) up to $sd_i\cdot r$ LOCAL rounds.

Apply \cref{lemma:died-simulation},  with $U$ being the set of vertices $u\notin H^+_{\ell}(j-1)$ which were eliminated during phase $\ell$.
By \cref{prop:lowarb:invariant} \cref{item2}, we already have the $d_{i-1}$-hop neighborhood of every vertex in $H_{\ell}^+(j-1)$ (as well as the communication history required by \cref{lemma:died-simulation}) stored into one machine. So we can simulate $d_{i-1}$ LOCAL rounds for all $v\in H_{\ell}^+(j-1)$ in one MPC round.

After simulating $d_{i-1}$ LOCAL rounds, we update the current states of the vertices in all stored $d_{i-1}$-hop neighborhoods, using $O(1)$ MPC rounds. Then we again simulate $d_{i-1}$ LOCAL rounds for all vertices in one MPC round, so that the total number of simulated LOCAL rounds becomes $2d_{i-1}$.
And we update the current states of vertices, and so on. 
In this way, we can simulate the first $sd_i\cdot r$ \LOCAL rounds in phase $\ell$  using $O(sd_i\cdot r/d_{i-1}) = O(1)$ MPC rounds. (This is the ``blind coordination'' technique used by Behnezhad et al.~\cite[Section 5.3]{BehneBDFHKU2019Massively})
When we simulate the behaviour of $v$ in phase $\ell$, we also record the communication history of $v$. Since the degree of $v$ in phase $\ell$ is at most $\Delta_{\ell-1}$, the size of the messages is $O(\Delta_{\ell-1} \cdot \polylog n)$.
\item $\textsc{Update}$.\, 
After vertices are removed from $G(j-1)$ by $\textsc{Simulate}(\ell,j)$, let $G(j)$ be the induced subgraph of the remaining vertices, 
and compute the vertex degrees in graph $G(j)$.
 Then, for every $\ell$, delete from $H_{\ell}(j-1)$ the vertices whose degree dropped to $\le \Delta_{\ell}$ and the ones that got removed, and obtain $H_{\ell}(j)$. Then similarly obtain $H_\ell^+(j)$. We also delete these vertices from the stored neighborhoods.
\item $\textsc{Expand}(\ell,j)$.\,
To maintain \cref{prop:lowarb:invariant} \cref{item2}, we perform one graph exponentiation step in $O(1)$ MPC rounds. Every $u\in H^+_\ell(j)$  requests the neighborhood of every other $v$ in the stored $d_{i-1}$-hop neighborhood of $u$.  The new radius becomes $2d_{i-1} = d_i$. Then it is easy to collect the required communication history of vertices in this neighborhood. 
\end{itemize}

Note that $\textsc{Simulate}(\ell,j)$ and $\textsc{Expand}(\ell,j)$ are independent across all phases $\ell$ and can be executed concurrently.
\end{proof}

	\begin{lemma}
	At iteration $j=(\ell-1)t+1$, we can implement $\textsc{Initialize}(\ell)$ using $O(1)$ MPC rounds so that at the end of this iteration \Cref{prop:lowarb:invariant} holds for phase $\ell$.
	\end{lemma}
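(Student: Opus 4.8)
The plan is to treat $\textsc{Initialize}(\ell)$ as the base case $i=1$ of the loop invariant \Cref{prop:lowarb:invariant}, where every relevant radius is constant: the collected neighborhoods need radius $d_1=2r$, and the number of simulated \LOCAL rounds is $sd_1\cdot r=2sr^2$, both $O(1)$ since $r=O(1)$ and $s=\lceil 10/\delta\rceil=O(1)$. Hence everything can be done in $O(1)$ graph-exponentiation steps. As phase $\ell$ has done nothing before iteration $j=(\ell-1)t+1$, nothing is inherited: I bootstrap from the $1$-hop adjacency of the current graph $G(j-1)$ (gathered via the standard sublinear-memory primitives, cf.\ Remark~\ref{remark:mpc}) and perform $\lceil\log_2(2sr^2)\rceil=O(1)$ radius-doublings, so that every $v\in H^+_\ell(j-1)$ collects its $(sd_1\cdot r)$-hop neighborhood inside $H^+_\ell(j-1)$ into a single machine; this contains in particular the $d_1$-hop neighborhood demanded by \Cref{prop:lowarb:invariant}\,(\hyperref[item:neighbor2]{a}).

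To establish \Cref{item1}, I invoke \Cref{obs:pending} for the phase-$\ell$ algorithm $A$, with initially pending set $H^+_{\ell-1}(j-1)$ --- exactly the vertices whose phase-$(\ell-1)$ state is not yet known. Fix $v\in H^+_\ell(j-1)$ whose $(sd_1\cdot r)$-hop neighborhood in $H^+_\ell(j-1)$ meets no vertex of $H^+_{\ell-1}(j-1)$. Because $\Delta_{\ell-1}>\Delta_\ell>\cdots$, the high-degree sets are nested, $H^+_1(j-1)\subseteq\cdots\subseteq H^+_{\ell-1}(j-1)$, so a constant-radius ball around $v$ is disjoint from all of them; by Property~(\ref{property2}) of \Cref{lem:local}, phases $1,\dots,\ell-1$ never touch this ball, so the current state of $v$ and its ball is already the correct phase-$\ell$ initial state, and the ball has maximum degree at most $\Delta_{\ell-1}$, so \Cref{lem:local} is applicable on it. Since the $(sd_1\cdot r)$-hop neighborhood of $v$ is already in one machine, I simulate the first $sd_1=2sr$ executions of the \LOCAL algorithm locally with no further communication; every remaining vertex of $H^+_\ell(j-1)$ is marked pending, exactly as algorithm $A'$ of \Cref{obs:pending} prescribes. \Cref{obs:pending} then certifies that these states coincide with those of the unpipelined run, which is \Cref{item1}.

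It remains to note that \Cref{item2} holds with little extra work: the $d_1$-hop neighborhoods are sub-neighborhoods of what was already collected, and the communication history of part~(b) --- for the vertices eliminated by the $2sr$ simulated phase-$\ell$ executions --- is recorded during the local simulation. Finally $\textsc{Initialize}(\ell)$ removes those eliminated vertices, and the $\textsc{Update}$ of iteration $j$ recomputes $G(j)$ and $H^+_\ell(j)$ and prunes each stored neighborhood to lie in $H^+_\ell(j)$, exactly as in the previous lemma. The whole procedure uses $O(1)$ MPC rounds: $O(1)$ doublings plus one round of purely local simulation.

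The step I expect to require the most care is the consistency claim in the second paragraph --- that, at the instant phase $\ell$ begins, a vertex outside the constant-radius blow-up of $H^+_{\ell-1}(j-1)$ genuinely holds its post-phase-$(\ell-1)$ state. This is where the nestedness of the $H^+$-sets and Property~(\ref{property2}) of \Cref{lem:local} must be combined carefully, because the object phase $\ell$ is actually run on is the live graph $G(j-1)$ rather than the idealized graph obtained once phase $\ell-1$ completes, and one must verify that the two agree on the region where states get finalized. The local- and global-memory accounting is deferred to \Cref{sec:lowarb:memory}.
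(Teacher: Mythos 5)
There is a genuine gap, and it is in your very first step: gathering the $(sd_1\cdot r)$-hop neighborhood of \emph{every} vertex of $H^+_\ell(j-1)$ into a single machine \emph{before} any phase-$\ell$ simulation has taken place. The memory analysis you defer to \cref{sec:lowarb:memory} does not cover this: it only bounds radius-$d_i$ neighborhoods (here $d_1=2r$) of the \emph{post-simulation} set $H^+_\ell(j)$, and both restrictions are essential. For the local memory, the counting argument behind \cref{lem:neighborhood} gives, for $q=0$, a ball of size up to $\Delta_{\ell-1}^{\,sd_1 r+1}$ while the constraint $|P^{(0)}_{\ell,j}|\ge 1$ only yields $\Delta_{\ell-1}\le n^{1/((s-1)d_1)}$; the resulting exponent $(sd_1 r+2)/((s-1)d_1)\approx r$ is \emph{not} below $\delta$ (recall $r\ge 1$ is just $O(1)$, not small), so the bound $n^\delta$ is lost. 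This is not an artifact of the bound: take the complete $(\Delta_\ell+1)$-ary tree of depth $2sr^2$ with $\Delta_\ell\approx n^{1/(2sr^2)}$ — every vertex lies in $H^+_\ell(j-1)$, no vertex has degree above $\Delta_{\ell-1}$ (so nothing is pending), and the root's $(sd_1\cdot r)$-hop ball is essentially the whole graph, $\Theta(n)\gg n^\delta$ vertices, so it cannot be placed on one machine at all. Global memory breaks similarly: in a forest of stars with centers of degree $\Delta_\ell+1$, all $n$ vertices are in $H^+_\ell(j-1)$ and each stored ball has $\Theta(\Delta_\ell)$ vertices, for a total of $\Theta(n\Delta_\ell)\gg\tilde O(n+m)$. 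The whole reason \cref{prop:lowarb:invariant} \cref{item2} and the bounds of \cref{sec:lowarb:memory} are stated for $H^+_\ell(j)$ is that the $sd_i$ simulated executions must happen \emph{first}, shrinking the set of vertices whose neighborhoods get stored; your order of operations forfeits exactly that shrinkage.

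The paper's proof avoids gathering altogether for the simulation step: since $sd_1\cdot r=O(1)$, it simply runs the $sd_1$ executions of the \LOCAL algorithm round by round, one \LOCAL round per $O(1)$ MPC rounds, in the pending sense of \cref{obs:pending}; this needs only the per-vertex space $\deg_G(v)\cdot\polylog n$ and per-edge $\polylog n$ guaranteed by Property~(\ref{property1}) of \cref{lem:local}, hence $\tilde O(n+m)$ global memory and no large balls on any machine. Only \emph{after} these removals does it collect the $2r$-hop ($=d_1$) neighborhoods, and only for vertices of the reduced set $H^+_\ell(j)$, together with the required communication history — which is precisely the object that \cref{prop:lowarb:invariant} \cref{item2} demands and that the analysis of \cref{sec:lowarb:memory} certifies to fit. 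Your second paragraph (correctness of the non-pending states via \cref{obs:pending}, Property~(\ref{property2}), and the nestedness of the $H^+$ sets) is fine in spirit, but the implementation you build it on — ``first collect radius-$2sr^2$ balls, then simulate locally'' — would have to be replaced by the direct round-by-round simulation for the lemma to hold within the model's memory constraints.
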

	\begin{proof}
We  run the \LOCAL algorithm in phase $\ell$ for $sd_1$ times (in the sense of \cref{obs:pending}). 
This can be done in $O(sd_1\cdot r)=O(1)$ MPC rounds with $\tilde O(n+m)$ global memory.
Then, we collect the $2r$-hop neighborhood of every vertex in $H_{\ell}^+(j)$ (and the required communication history) in $O(1)$ MPC rounds.
	\end{proof}
	
	\begin{remark}
\label{remark:mpc}
We did not spell out the low-level details, but it is not hard to verify that our algorithm only involves basic operations and   can be easily implemented using the standard MPC primitives developed in previous works, e.g.\ \cite[Section E]{andonifocs} and the references therein. 
The only concern is that, when $\Delta \ge n^{\delta}$, we cannot gather the neighborhood of a vertex into one machine. 
This issue was already addressed in Behnezhad~et~al.'s unpipelined algorithm  \cite[Section~6]{BehneDHK2018Massively}, using a load-balancing technique.
Hence we can first sequentially run $O(1)$ phases of their algorithm (in $O(\log \log n)$ MPC rounds) at the very beginning, which reduce the maximum degree of the remaining graph to below $n^{\alpha}$ for any desired constant $\alpha>0$.
\end{remark}
	
Now we use \cref{prop:lowarb:invariant} \cref{item1} to show the round complexity of our algorithm.
	\begin{theorem}
	    Our pipelined MPC algorithm runs in $O(\sqrt{\log \alpha} \log \log \alpha + \log\log n)$ rounds.
	\end{theorem}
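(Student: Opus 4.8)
The plan is to count MPC rounds by combining three ingredients: the number of iterations before we switch to the Ghaffari--Uitto algorithm, the number of rounds that switch itself costs, and the preprocessing from Remark~\ref{remark:mpc}. First I would argue that after $L = O(\log\log\Delta)$ phases have each completed, the remaining graph has maximum degree below $\poly(\alpha,\log n)$; this is exactly the unpipelined guarantee, since $\Delta_L = \Delta^{1/2^L}$ drops below any fixed polynomial in $\alpha$ and $\log n$ once $L$ is a large enough constant times $\log\log\Delta$. The key point is then to bound how many \emph{iterations} it takes for phase $L$ to not just start but fully finish. Phase $\ell$ starts at iteration $(\ell-1)t+1$, and by \Cref{prop:lowarb:invariant} \Cref{item1}, after $i$ iterations of phase $\ell$ we have simulated $sd_i = s\cdot 2^i r$ executions of the \LOCAL algorithm for every vertex whose relevant neighborhood contained no pending vertex. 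Since phase $\ell$ only needs $O(\log_{\Delta_\ell} n)$ executions of the \LOCAL algorithm, and $s d_i$ grows geometrically in $i$, after $i = O(\log\log_{\Delta_\ell} n) = O(\log\log n)$ iterations phase $\ell$'s simulation is complete \emph{provided} the pending vertices from earlier phases have cleared; but by induction on $\ell$ (with the lag $t=O(1)$ absorbing the handoff), all of phases $1,\dots,\ell$ finish within $O(\log\log n) + \ell\cdot t = O(\log\log n)$ iterations. Hence the main loop runs for $O(\log\log\Delta) \cdot t + O(\log\log n) = O(\log\log n)$ iterations, each costing $O(1)$ MPC rounds by the two preceding lemmas.

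Next I would account for the tail. When the main loop ends, the remaining graph has maximum degree $\poly(\alpha,\log n)$, so we invoke the Ghaffari--Uitto algorithm \cite{GhaffU2019Sparsifying}, which on a graph of maximum degree $\Delta'$ runs in $O(\sqrt{\log\Delta'}\log\log\Delta' + \poly(\log\log n))$ rounds; with $\Delta' = \poly(\alpha,\log n)$ this is $O(\sqrt{\log\alpha}\log\log\alpha + \poly(\log\log n)) = O(\sqrt{\log\alpha}\log\log\alpha + \log\log n)$ after folding the $\log\log n$ terms together. Finally, Remark~\ref{remark:mpc} adds an $O(\log\log n)$-round preprocessing step to bring $\Delta$ below $n^{\alpha_0}$ for a small constant so that neighborhoods fit in one machine. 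Summing the three contributions gives the claimed $O(\sqrt{\log\alpha}\log\log\alpha + \log\log n)$.

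The main obstacle is the induction on $\ell$ establishing that pending vertices clear in time, i.e., that the $sd_i\cdot r$-hop neighborhood of a vertex relevant in phase $\ell$ really does become free of initially pending vertices within $O(\log\log n)$ iterations of phase $\ell$ starting. This requires that the lag $t$ is chosen large enough that, by the time phase $\ell$ has run for $i$ iterations, phase $\ell-1$ has run for $i + t$ iterations and has therefore already simulated enough \LOCAL executions to have removed or reduced all high-degree vertices within the relevant radius $sd_i \cdot r$ — and crucially the radius grows only by a factor of $2$ per iteration while $s d_i r$ also only doubles, so a constant lag suffices. I would state this as a clean invariant and verify it inductively, deferring the exact constant for $t$ to \cref{sec:lowarb:memory} where the memory analysis also pins it down; the round-complexity argument only needs that some constant $t$ works, which follows from the geometric growth of $d_i$.
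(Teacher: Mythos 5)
Your proposal is correct and follows essentially the same route as the paper: wait $O(tL)+O(\log\log n)$ iterations until every active phase has $sd_i>\log n$, then argue by induction over phases (via \cref{prop:lowarb:invariant}, \cref{item1}) that $H_1,H_2,\dots,H_L$ become empty in a cascade costing $O(1)$ iterations per phase, and finish with Ghaffari--Uitto on the remaining $\poly(\alpha,\log n)$-degree graph, for a total of $O(\sqrt{\log\alpha}\log\log\alpha+\log\log n)$ rounds. One caution: the per-iteration justification in your last paragraph (that the constant lag $t$ keeps phase $\ell-1$ ``locally ahead,'' having cleared all high-degree vertices within radius $sd_i\cdot r$ for every $i$) is neither true for small $i$ nor needed; the argument only uses that once $H^+_{\ell-1}(j-1)$ is \emph{globally} empty and $sd_i>\log n$, the invariant forces $H_{\ell}(j)$ to be empty one iteration later, which is exactly the cascade your bound $O(\log\log n)+\ell\cdot t$ encodes.
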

	\begin{proof}
	    First consider the unpipelined scenario: in phase $\ell$, by \cref{lem:local}, after $\log n$ executions of the \LOCAL algorithm, w.h.p.\ there are no remaining vertices with degree higher than $\Delta_{\ell}$.
	    
	Now we look at our pipelined MPC simulation.
	Note that each iteration takes $O(1)$ MPC rounds.
	After $j_1 = O(t\cdot L) = O(\log \log \Delta)$ iterations, all phases have started their simulation.
	Then, after another $j_2 = O(\log \log n)$ iterations,  every phase $\ell$ has at least been running for $i> \log \log n$ iterations. Let $j=j_1+j_2$ denote the number of iterations so far. By  \cref{prop:lowarb:invariant} \cref{item1}, if $H_{\ell-1}^+(j-1)$ is empty, then at the end of iteration $j$ we will have finished $sd_i>\log n$ many \LOCAL executions of phase $\ell$, which (together with the previous paragraph) implies  $H_{\ell}(j)$ is empty.
	Hence, at this point $H_{1}(j)$ is already empty (since $H_{0}^{+}(j-1)$ is always empty). 
	This in turn implies that in the next iteration $j+1$, $H_{2}(j+1)$ will be empty.
	By a simple induction, after $L-1$ iterations, $H_{L}(j+L-1)$ becomes empty, which means that the maximum degree of the remaining graph $G(j+L-1)$ is at most $\Delta_{L} = \poly(\alpha,\log n)$. 
	Finally, we use Ghaffari and Uitto's algorithm~\cite{GhaffU2019Sparsifying}, which takes $O(\sqrt{\log \Delta_L}\log \log \Delta_L +\sqrt{\log \log n}) \le O(\sqrt{\log \alpha}\log \log \alpha)+\tilde O(\sqrt{\log \log n}) $ MPC rounds.
	Hence, the total round complexity is $O(\sqrt{\log \alpha} \log \log \alpha + \log\log n)$.
	\end{proof}

\subsection{Memory requirement}
\label{sec:lowarb:memory}
In this section we will show that our algorithm satisfies the local memory and global memory constraints.
In the following, we will always assume $j=(\ell-1)t+i$, that is, $j$ is the $i$-th iteration since phase $\ell \in [L]$ starts.
Recall that $\Delta_\ell := \Delta^{1/2^{\ell}}$.
For a subgraph $H$, we use $|H|$ to denote the number of vertices in $H$.
We need the following lemma.
\begin{lemma}
\label{lem:num-vert-H}
With high probability,  $|H^+_{\ell}(j)|\le 2n/\Delta_{\ell-1}^{(s-1)d_i}$.
\end{lemma}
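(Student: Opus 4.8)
The plan is to prove the bound by induction on $\ell$, treating all iterations $i\ge1$ of a fixed phase $\ell$ together, and to split $H^+_\ell(j)$ (with $j=(\ell-1)t+i$, so that $d_i=2^ir$) into a \emph{clean} part and a \emph{pending} part.

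For the clean part, recall that degrees only decrease, so $G(0)\supseteq G(1)\supseteq\cdots$ and hence $H_\ell(j)\subseteq H_\ell(j-1)\subseteq H^+_\ell(j-1)$, with every $v\in H_\ell(j)$ keeping degree $>\Delta_\ell$ throughout the phase-$\ell$ simulation. Call such a $v$ \emph{clean} if no $u\in H^+_{\ell-1}(j-1)$ satisfies $\dis_{H^+_\ell(j-1)}(u,v)\le sd_i\cdot r$. By \cref{prop:lowarb:invariant} \cref{item1} and \cref{obs:pending}, for a clean $v$ the simulated state after $sd_i$ executions of the phase-$\ell$ \LOCAL algorithm equals its true state in the unpipelined phase $\ell$, and that state still has degree $>\Delta_\ell=\sqrt{\Delta_{\ell-1}}$. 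Iterating \cref{lem:local} with degree parameter $\Delta_{\ell-1}$ (so that ``high-degree'' there means degree $>\Delta_\ell$), $sd_i$ executions shrink the high-degree set by a factor $\Delta_{\ell-1}^{sd_i}$ \whp, so there are at most $n/\Delta_{\ell-1}^{sd_i}$ clean vertices. Cleanliness moreover places no vertex of $H_{\ell-1}(j-1)$ within $r\,(\le sd_i\cdot r)$ hops of a clean $v$, so every vertex of $H^+_\ell(j)$ within $r$ hops of a clean vertex sits in a region of degree $\le\Delta_{\ell-1}$; hence this portion of $H^+_\ell(j)$ has size at most $(n/\Delta_{\ell-1}^{sd_i})\cdot\Delta_{\ell-1}^{2r}\le n/\Delta_{\ell-1}^{(s-1)d_i}$, using $d_i\ge2r$.

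For the pending part, every remaining vertex of $H^+_\ell(j)$ is within $r$ hops of a \emph{pending} $w\in H_\ell(j)$, and by the contrapositive of \cref{prop:lowarb:invariant} \cref{item1}, $w$ is within $sd_i\cdot r$ hops of $H^+_{\ell-1}(j-1)$. Chasing this dependency backwards produces a vertex $u$ \emph{stuck} at some phase $\ell-q$ ($q\ge1$): $u\in H_{\ell-q}(j-1)$, its phase-$(\ell-q)$ simulation has finished, and $u$ is chosen so that no vertex encountered on the way is stuck at a deeper phase. Writing $i_{\ell-q}:=qt+i-1$ for the iteration index of phase $\ell-q$ at iteration $j-1$, the blocking radii $sd_{i_{\ell-q'}}\cdot r$ along the chain form a geometric progression dominated by its last term, so $u$ and $v$ are joined by a path of length $L_q=O(sr^2\,2^i\,2^{(q-1)t})$; by the maximality of $q$ and the firewall property of \cref{lem:local}, no vertex on this path lies in $H_{\ell-q-1}(j-1)$, so all of them have degree $\le\Delta_{\ell-q-1}$. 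Counting terminal vertices $u$ via the induction hypothesis for phase $\ell-q$ at iteration $j-1$, the pending part has size at most
\[
\sum_{q\ge1}\bigl|H^+_{\ell-q}(j-1)\bigr|\cdot\Delta_{\ell-q-1}^{\,L_q}\;\le\;\sum_{q\ge1}\frac{2n}{\Delta_{\ell-q-1}^{\,(s-1)d_{i_{\ell-q}}}}\cdot\Delta_{\ell-q-1}^{\,L_q}.
\]
Choosing the lag $t=O(1)$ large enough that $L_q\le\tfrac12(s-1)d_{i_{\ell-q}}$ (recall $s=\lceil10/\delta\rceil\ge2$), and using $\Delta_{\ell-q-1}=\Delta_{\ell-1}^{2^q}$ together with $d_{i_{\ell-q}}=2^{qt-1}d_i$, this series is geometric, dominated by its $q=1$ term, and sums to at most $n/\Delta_{\ell-1}^{(s-1)d_i}$. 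Adding the two parts gives $|H^+_\ell(j)|\le2n/\Delta_{\ell-1}^{(s-1)d_i}$; the base case $\ell=1$ is the same computation with an empty pending part, since $H^+_0(\cdot)=\emptyset$.

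The main obstacle is the pending analysis: making the backward dependency chain precise, justifying that $u$ may always be taken to be stuck at the \emph{deepest} reachable phase (which is exactly what forces the whole path to avoid $H_{\ell-q-1}$ and hence have degree at most $\Delta_{\ell-q-1}$), and verifying that the iteration indices $i_{\ell-q}$ turn the resulting sum into a genuine geometric series with the right exponents once $t$ is a suitable constant. The clean bound from \cref{lem:local}, the $r$-hop blow-up factors, and the base case are routine given $d_i=2^ir$ and $s=\lceil10/\delta\rceil$.
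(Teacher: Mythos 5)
Your proposal is correct and follows essentially the same route as the paper: the same split of $H^+_\ell(j)$ into the ``clean'' class bounded via \cref{prop:lowarb:invariant} \cref{item1} and \cref{lem:local} (giving $n/\Delta_{\ell-1}^{sd_i}$ times an $r$-hop blow-up at degree $\le\Delta_{\ell-1}$) and the classes indexed by the deepest nearby stuck phase, each bounded by $|H_{\ell-q}(j-1)|$ times a degree-$\Delta_{\ell-q-1}$ ball, followed by induction on $\ell$ with the lag $t$ a large constant. The only difference is cosmetic: where you chase a backward dependency chain with radii $L_q$ growing in $q$ (and must argue maximality of $q$ and track earlier iterations), the paper formalizes the same idea via the single quantity $D_{\ell,j}(v)=\max$ degree within the fixed radius $(sd_i+2)r$, which makes the existence of the stuck vertex $u\in H_{\ell-q}(j-1)$ and the degree bound $\le\Delta_{\ell-q-1}$ along the path immediate and removes exactly the bookkeeping you flag as the main obstacle.
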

In order to prove \cref{lem:num-vert-H}, we first define the following quantity. 
\begin{definition}
\label{def:D}
For every $v\in H_{\ell}^+ (j)$, let
$\displaystyle D_{\ell,j}(v):=\max_{ \begin{smallmatrix}\scriptstyle u\in G(j-1):\\ \scriptstyle\dis_{G(j-1)}(u,v) \, \le\, (sd_i+2)r\end{smallmatrix}} \big \{\deg_{G(j-1)} (u) \big \}.$
\end{definition}
Note that $D_{\ell,j}(v) \ge \deg_{G(j-1)}(v^-)> \Delta_{\ell}$, where $v^-\in H_{\ell}(j)$ and $\dis_{G(j-1)}(v^-,v)\le r$. Then, we classify the vertices  in $H_{\ell}^+(j)$ by the values of $D_{\ell,j}(v)$.

\begin{definition}
The vertices in $H_{\ell}^+(j)$ are partitioned as $\mathop{\dot{\bigcup}}_{0\le q \le \ell-1}P^{(q)}_{\ell,j}$, where
\[P^{(q)}_{\ell,j} := \big \{ v\in H_{\ell}^+(j): D_{\ell,j}(v) \in (\Delta_{\ell-q}, \Delta_{\ell-q-1}]\big\} .\]
\end{definition}
Informally, for $v\in P_{\ell,j}^{(q)}$, there exists a nearby vertex $u$ which is currently stuck in phase $\ell-q$.
In order to bound $|H_{\ell}^+(j)|$, we will bound the size of $P_{\ell,j}^{(q)}$ separately as follows.
\begin{lemma}
\label{lem:plj}
With high probability, the following upper bounds hold.
\begin{enumerate}
    \item $|P_{\ell,j}^{(0)}|\le n/\Delta_{\ell-1}^{(s-1)d_i}$.
    \label{item101}
    \item For $1\le q\le \ell-1$, $|P_{\ell,j}^{(q)}| \le |H_{\ell-q}(j-1)|\cdot \Delta_{\ell-q-1}^{(sd_i+3)r}$. 
    \label{item102}
\end{enumerate}
\end{lemma}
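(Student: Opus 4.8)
The plan is to prove the two bounds separately, in both cases tracing back to the degree-reduction guarantee of \cref{lem:local}, which says that each execution of the \LOCAL algorithm reduces the number of vertices of degree exceeding the current threshold by a $\Delta$-factor (with high probability).

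\emph{Part (\ref{item101}): bounding $|P_{\ell,j}^{(0)}|$.}
For $v\in P_{\ell,j}^{(0)}$ we have $D_{\ell,j}(v)\le \Delta_{\ell-1}$, meaning every vertex $u$ within distance $(sd_i+2)r$ of $v$ in $G(j-1)$ has $\deg_{G(j-1)}(u)\le \Delta_{\ell-1}$. In particular the $(sd_i\cdot r)$-hop neighborhood of $v$ contains no vertex of degree exceeding $\Delta_{\ell-1}$, i.e.\ no vertex that is still ``stuck'' waiting for phase $\ell-1$ to finish; so by \Cref{obs:pending} and \cref{prop:lowarb:invariant} \cref{item1}, $v$'s state has been advanced through all $sd_i$ executions of the phase-$\ell$ \LOCAL algorithm, matching the unpipelined execution. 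Since $v\in H_{\ell}^+(j)$, there is a vertex $v^-\in H_\ell(j)$ with $\dis_{G(j-1)}(v^-,v)\le r$ that survived all $sd_i$ phase-$\ell$ executions with degree still $>\Delta_\ell$. Now I would apply \cref{lem:local} iteratively $sd_i$ times with degree parameter $\Delta_{\ell-1}$ (so $\sqrt{\Delta_{\ell-1}}=\Delta_\ell$): after $k$ executions the number of surviving vertices of degree $>\Delta_\ell$ is at most $|H_\ell^{(0)}|/\Delta_{\ell-1}^{k} \le n/\Delta_{\ell-1}^{k}$, hence after $sd_i$ executions there are at most $n/\Delta_{\ell-1}^{sd_i}$ such vertices $v^-$. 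Each such $v^-$ accounts for all $v\in P_{\ell,j}^{(0)}$ within its $r$-hop ball; since every vertex in that ball has degree $\le\Delta_{\ell-1}$, the ball has size $\le \Delta_{\ell-1}^r$. Therefore $|P_{\ell,j}^{(0)}|\le n\Delta_{\ell-1}^{r}/\Delta_{\ell-1}^{sd_i} = n/\Delta_{\ell-1}^{sd_i-r}\le n/\Delta_{\ell-1}^{(s-1)d_i}$, using $d_i\ge 2r$ so $sd_i-r\ge (s-1)d_i$.

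\emph{Part (\ref{item102}): bounding $|P_{\ell,j}^{(q)}|$ for $q\ge 1$.}
Here the idea is a reachability/path-counting argument. Fix $v\in P_{\ell,j}^{(q)}$. By definition $D_{\ell,j}(v)\in(\Delta_{\ell-q},\Delta_{\ell-q-1}]$, so there is a witness $u$ with $\dis_{G(j-1)}(u,v)\le(sd_i+2)r$ and $\deg_{G(j-1)}(u)>\Delta_{\ell-q}$, i.e.\ $u\in H_{\ell-q}(j-1)$; and moreover \emph{every} vertex on a shortest $u$–$v$ path has degree $\le D_{\ell,j}(v)\le \Delta_{\ell-q-1}$ (by maximality in the definition of $D_{\ell,j}$ — any higher-degree vertex near $v$ would increase $D_{\ell,j}(v)$). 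So $v$ is reachable from some $u\in H_{\ell-q}(j-1)$ by a path of length at most $(sd_i+2)r+r=(sd_i+3)r$ on which all vertices have degree $\le\Delta_{\ell-q-1}$ (the extra $r$ absorbs the $\dis(v^-,v)\le r$ slack if one wants to phrase it via $v^-$, but going directly from $u$ to $v$ already suffices). The number of such paths starting at a fixed $u$ is at most $\Delta_{\ell-q-1}^{(sd_i+3)r}$, since at each of the $\le(sd_i+3)r$ steps there are at most $\Delta_{\ell-q-1}$ choices; summing over the $|H_{\ell-q}(j-1)|$ possible starting vertices $u$ gives $|P_{\ell,j}^{(q)}|\le |H_{\ell-q}(j-1)|\cdot\Delta_{\ell-q-1}^{(sd_i+3)r}$.

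\emph{Main obstacle.}
The delicate point is Part (\ref{item101}): I must be careful that the hypothesis $D_{\ell,j}(v)\le\Delta_{\ell-1}$ on the $(sd_i+2)r$-hop ball in $G(j-1)$ really does certify, via \Cref{obs:pending}, that $v$ (and its $r$-close witness $v^-$) has been simulated through all $sd_i$ phase-$\ell$ executions exactly as in the unpipelined run — this requires matching the radius $(sd_i+2)r$ in \cref{def:D} against the $sd_i\cdot r$ radius needed for \cref{prop:lowarb:invariant} \cref{item1} and the $2r$ extra slack for passing to $v^-$ and for the $r$-hop closure $H^+$. Once that bookkeeping is pinned down, invoking \cref{lem:local} $sd_i$ times and taking a union bound over the (polynomially many) executions gives the high-probability claim; the path-counting in Part (\ref{item102}) is then routine.
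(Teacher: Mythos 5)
Your proposal is correct and follows essentially the same route as the paper: part (\ref{item101}) via \cref{prop:lowarb:invariant} \cref{item1} plus the degree-reduction guarantee of \cref{lem:local} and an $r$-hop ball blow-up, and part (\ref{item102}) via path-counting from the maximizing witness $u\in H_{\ell-q}(j-1)$ with all path vertices of degree at most $\Delta_{\ell-q-1}$. The bookkeeping you flag as the main obstacle is resolved exactly as you suspect: the radius $(sd_i+2)r$ in \cref{def:D} is chosen so that the two extra $r$-hops absorb the passage from $v\in H_\ell^+(j)$ to its witness in $H_\ell(j)$ and the $r$-hop closure defining $H^+_{\ell-1}(j-1)$, which is precisely the contradiction argument in the paper's proof.
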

\begin{proof}[Proof of \Cref{item101}]


For $v^+\in P_{\ell,j}^{(0)}\subseteq H_{\ell}^+(j)$, there exists $v\in H_{\ell}(j)$ such that $\dis_{G(j)}(v,v^+)\le r$.
 Suppose for contradiction that there exists $u^+\in H_{\ell-1}^+(j-1)$ satisfying $\dis_{G(j-1)}(u^+,v) \le sd_i r$. Then there exists $u \in H_{\ell-1}(j-1)$ with $\dis_{G(j-1)}(u,u^+)\le r$, implying that $\dis_{G(j-1)}(u,v^+)\le \dis_{G(j-1)}(u,u^+)+\dis_{G(j-1)}(u^+,v) +\dis_{G(j-1)}(v,v^+)\le r+sd_i r + r = (sd_i+2)r$.
 However, since $\deg_{G(j-1)}(u) > \Delta_{\ell-1}$, this contradicts $D(v^+) \le \Delta_{\ell-1}$.
 Hence such $u^+$ does not exist.
 
 Then, by  \cref{prop:lowarb:invariant} \cref{item1}, this implies that the state of $v$ after $sd_i$ executions of the \LOCAL algorithm in phase $\ell$ is already determined.
 By the property of the \LOCAL algorithm, the number of such vertices $v\in H_{\ell}(j)$ is at most $n/\Delta_{\ell-1}^{sd_i}$.
 
Note that $v^+$ is connected to $v$ in graph $G(j)$ by a path of length at most $r$, and previous discussion implies that every vertex $w$ on this path satisfies $w\notin H_{\ell-1}^+(j-1)$ and thus $\deg_{G(j)}(w) \le  \Delta_{\ell-1}$. 
So the number of such $v^+$ is at most $(n/\Delta_{\ell-1}^{sd_i})\cdot \Delta_{\ell-1}^{r+1} \le n/\Delta_{\ell-1}^{(s-1)d_i}$.
\end{proof}
\begin{proof}[Proof of \Cref{item102}.]
For $v\in P_{\ell,j}^{(q)}$, let $u\in G(j-1)$ be the maximizer in the definition of $D_{\ell,j}(v)$.
Then $v$ is connected to $u$   in graph $G(j-1)$ by a path   of length at most $(sd_i+2)r$, on which every vertex $w$ (including $u,v$) has degree $\deg_{G(j-1)}(w)\le \deg_{G(j-1)}(u) \le \Delta_{\ell-m-1}$.
Fixing a vertex $u$, the number of vertices $v$ that can be reached from $u$ in this way is at most $\Delta_{\ell-q-1}^{(sd_i+2)r+1} \le \Delta_{\ell-q-1}^{(sd_i+3)r}$. 
The number of vertices $u\in G(j-1)$ with $\deg_{G(j-1)}(u)\in (\Delta_{\ell-q},\Delta_{\ell-q-1}]$ is at most $|H_{\ell-q}(j-1)|$.
Hence, we conclude that $|P_{\ell,j}^{(q)}| \le |H_{\ell-q}(j-1)| \cdot \Delta_{\ell-q-1}^{(sd_i+3)r}$.
\end{proof}
\begin{proof}[Proof of \cref{lem:num-vert-H}]
To show that $|H^+_{\ell}(j)|\le 2n/\Delta_{\ell-1}^{(s-1)d_i}$, we prove by induction on $\ell$. 

For $\ell=1$, by \cref{lem:plj}, we have $|H_{1}^+(j)| =|P_{1,j}^{(0)}| \le n/\Delta_{0}^{(s-1)d_i}\le 2n/\Delta_{\ell-1}^{(s-1)d_i}$.

For $\ell \ge 2$, by \cref{lem:plj} and induction hypothesis,
\begin{align*}
    |H_{\ell}^+(j)| =|P_{\ell,j}^{(0)}|+ \sum_{1\le q <\ell}|P_{\ell,j}^{(q)}| 
    &\le \frac{n}{\Delta_{\ell-1}^{(s-1)d_i}} + \sum_{1\le q< \ell}|H^+_{\ell-q}(j-1)|\cdot \Delta_{\ell-q-1}^{(sd_i +3)r}\\
    & \le \frac{n}{\Delta_{\ell-1}^{(s-1)d_i}} + \sum_{1\le q< \ell}\frac{2n}{\Delta_{\ell-q-1}^{(s-1)d_{i+qt-1}}}\cdot \Delta_{\ell-q-1}^{(sd_i+3) r}\\
    & \le \frac{n}{\Delta_{\ell-1}^{(s-1)d_i}} + \sum_{1\le q< \ell}\frac{2n}{\Delta_{\ell-1}^{d_{i}((s-1)2^{qt-1}-(s+2)r)/2^q}}\\
    & \le \frac{n}{\Delta_{\ell-1}^{(s-1)d_i}} + \sum_{1\le q< \ell}\frac{2n}{\Delta_{\ell-1}^{(s-1)d_{i}q+1}},
\end{align*} 
where in the second inequality we used the induction hypothesis and $j-1=(\ell-q-1)t+i+qt-1$,
and in the last inequality we chose $t$ to be a big enough constant. Then, assuming $\Delta_{\ell-1}\ge 4$,  we conclude  
$|H_{\ell}^+(j)| \le 2n/\Delta_{\ell-1}^{(s-1)d_i}$.
\end{proof}

Now we turn to analyzing the memory used for storing the $d_i$-hop neighborhood of $v\in  H_{\ell}^+(j)$ (together with their communication history required by \Cref{prop:lowarb:invariant} \Cref{item2}).
\begin{lemma}
\label{lem:neighborhood}
    For $v\in P_{\ell,j}^{(q)}$ $(0\le q<\ell)$, the memory for storing the $d_i$-hop neighborhood of $v$ in graph $H_{\ell}^+(j)$ is at most 
    $\Delta_{\ell-q-1}^{d_i+2}\cdot \polylog n.$
\end{lemma}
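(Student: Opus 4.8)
The plan is to read off a degree bound on the neighborhood directly from the definition of $P_{\ell,j}^{(q)}$. Since $v\in P_{\ell,j}^{(q)}$ we have $D_{\ell,j}(v)\le\Delta_{\ell-q-1}$, so by \cref{def:D} every vertex within distance $(sd_i+2)r$ of $v$ in $G(j-1)$ has $G(j-1)$-degree at most $\Delta_{\ell-q-1}$. Because $H_\ell^+(j)\subseteq G(j)\subseteq G(j-1)$, distances measured in $H_\ell^+(j)$ can only be larger than distances in $G(j-1)$, and since $d_i\le(sd_i+2)r$ (as $s,r\ge 1$), every vertex $w$ in the $d_i$-hop neighborhood of $v$ in $H_\ell^+(j)$ satisfies $\dis_{G(j-1)}(v,w)\le d_i$, hence $\deg_{G(j-1)}(w)\le\Delta_{\ell-q-1}$, and a fortiori $\deg_{H_\ell^+(j)}(w)\le\Delta_{\ell-q-1}$.

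Given this, I would first bound the combinatorial size: a breadth-first count from $v$ using the degree bound $\Delta_{\ell-q-1}$ gives at most $\sum_{k=0}^{d_i}\Delta_{\ell-q-1}^{k}\le\Delta_{\ell-q-1}^{d_i+1}$ vertices, and since each such vertex has degree at most $\Delta_{\ell-q-1}$, at most $\Delta_{\ell-q-1}^{d_i+1}\cdot\Delta_{\ell-q-1}=\Delta_{\ell-q-1}^{d_i+2}$ edges in total, including the boundary edges leaving the neighborhood. Next I would account for the per-object cost: by \cref{lem:local} Property~(\ref{property1}), each vertex's state and random bits take $\polylog n$ bits, and each edge carries $\polylog n$ bits per round; since phase $\ell$ consists of $O(\log_{\Delta_\ell}n)=O(\log n)$ executions of the \LOCAL algorithm, each of $r=O(1)$ rounds, the communication transcript recorded along a single edge (as required by \cref{prop:lowarb:invariant} \cref{item2}) is $\polylog n$ bits. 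Multiplying the $\le\Delta_{\ell-q-1}^{d_i+2}$ edges and $\le\Delta_{\ell-q-1}^{d_i+1}$ vertices by $\polylog n$ then yields the claimed bound $\Delta_{\ell-q-1}^{d_i+2}\polylog n$.

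The step that needs care — and which I expect to be the main obstacle — is the count of the \emph{boundary} edges $(u,w)$ for which a communication transcript must be stored, i.e.\ those with $w$ in the neighborhood and $u\notin H_\ell^+(j)$ eliminated during phase $\ell$: a priori $u$ might have been a neighbor of $w$ early in phase $\ell$, before $w$'s degree had dropped to $\le\Delta_{\ell-q-1}$, so the $G(j-1)$-degree bound does not immediately apply. The resolution is that $u$ can be eliminated (matched, or placed in / adjacent to the independent set) only while it is non-pending, and if $w$ were pending at that moment it would have propagated a pending mark to $u$ (by the rule defining $A'$ in \cref{obs:pending}), preventing the elimination; hence $w$ itself must be non-pending whenever a neighbor of it gets eliminated, and a non-pending vertex in phase $\ell$ has degree at most $\Delta_{\ell-1}\le\Delta_{\ell-q-1}$. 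Since neighbor sets only shrink as phase $\ell$ progresses, the total number of such $u$ is at most $\Delta_{\ell-q-1}$ per $w$, which is already subsumed in the edge count above; the rest is routine bookkeeping of the MPC data layout.
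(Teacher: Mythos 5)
Your proposal follows essentially the same route as the paper's proof: it uses $D_{\ell,j}(v)\le\Delta_{\ell-q-1}$ (via \cref{def:D} and the fact that $H_\ell^+(j)$ is a subgraph of $G(j-1)$ with $d_i\le(sd_i+2)r$) to bound all degrees in the stored neighborhood, counts at most $\Delta_{\ell-q-1}^{d_i+1}$ vertices, and multiplies by a per-vertex/per-edge $\Delta_{\ell-1}\cdot\polylog n$ factor for the phase-$\ell$ communication history, exactly as in the paper. Your closing paragraph about edges to vertices eliminated during phase $\ell$ (arguing via the pending-propagation rule of \cref{obs:pending} that such a $w$ must be non-pending and hence has phase-$\ell$ degree at most $\Delta_{\ell-1}$) is a correct, more explicit justification of the per-vertex history bound that the paper states only tersely.
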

\begin{proof}
Since $v\in P_{\ell,j}^{(q)}$, we have $D_{\ell,j}(v) \le \Delta_{\ell-q-1}$. By the definition of $D_{\ell,j}(v)$, every vertex $u$ in the $d_i$-hop neighborhood of $v$ in $H_{\ell}^+(j)$ has $\deg_{H_{\ell}^+(j)}(u)\le \deg_{G(j-1)}(u)\le  D_{\ell,j}(v) \le \Delta_{\ell-q-1}$,
  and thus the number of vertices $u$ in the $d_i$-hop neighborhood of $v$ is at most $\Delta_{\ell-q-1}^{d_i+1}$. 
 Hence the total memory for storing this neighborhood  (together with their communication history in phase $\ell$) is at most 
$\Delta_{\ell-q-1}^{d_i+1}\cdot (\Delta_{\ell-1}\polylog n) \le \Delta_{\ell-q-1}^{d_i+2}\cdot \polylog n.$
\end{proof}

We prove that this memory does not violate the memory constraints.
\begin{theorem}[Local memory constraint]
 With high probability, for any $v\in H_{\ell}^+(j)$, the memory for storing the $d_i$-hop neighborhood of $v$ in graph $H^+_\ell(j)$ is at most $n^\delta$.
\end{theorem}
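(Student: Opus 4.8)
\medskip
\noindent\emph{Proof plan.}
Fix an active phase $\ell$ and iteration $j=(\ell-1)t+i$, and let $v\in H_\ell^+(j)$. Since $H_\ell^+(j)$ is the disjoint union of the sets $P_{\ell,j}^{(q)}$ over $0\le q<\ell$, the vertex $v$ lies in a unique $P_{\ell,j}^{(q)}$, and by \cref{lem:neighborhood} the memory for $v$'s $d_i$-hop neighborhood in $H_\ell^+(j)$ is at most $\Delta_{\ell-q-1}^{d_i+2}\cdot\polylog n$. So I would reduce the theorem to showing $\Delta_{\ell-q-1}^{d_i+2}\cdot\polylog n\le n^\delta$. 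The plan is to first establish the key inequality $\Delta_{\ell-q-1}^{(s-1)d_i}\le 2n$, and then finish using the parameter choices: since $s=\lceil 10/\delta\rceil$ and $0<\delta<1$ we have $1/(s-1)\le\delta/9$, and by the preprocessing of Remark~\ref{remark:mpc} we may assume the input maximum degree satisfies $\Delta\le n^{\alpha}$ for any constant $\alpha$ fixed in advance, say $\alpha=\delta/9$. Then, using $\Delta_{\ell-q-1}=\Delta^{1/2^{\ell-q-1}}\le\Delta$,
\[
\Delta_{\ell-q-1}^{d_i+2}\cdot\polylog n \;=\; \Delta_{\ell-q-1}^{d_i}\cdot\Delta_{\ell-q-1}^{2}\cdot\polylog n \;\le\; (2n)^{1/(s-1)}\cdot\Delta^{2}\cdot\polylog n \;\le\; (2n)^{\delta/9}\cdot n^{2\delta/9}\cdot\polylog n \;\le\; n^\delta
\]
for $n$ sufficiently large.

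The heart of the argument is the bound $\Delta_{\ell-q-1}^{(s-1)d_i}\le 2n$, which I would obtain from \cref{lem:num-vert-H} by noting that $P_{\ell,j}^{(q)}\ne\emptyset$ forces an appropriate $H^+$-set to be nonempty. If $q=0$, then $P_{\ell,j}^{(0)}\subseteq H_\ell^+(j)$ is nonempty, so $|H_\ell^+(j)|\ge1$, and \cref{lem:num-vert-H} directly gives $\Delta_{\ell-1}^{(s-1)d_i}\le 2n$. If $q\ge1$, let $u\in G(j-1)$ be the maximizer in the definition of $D_{\ell,j}(v)$; since $v\in P_{\ell,j}^{(q)}$ we have $\deg_{G(j-1)}(u)=D_{\ell,j}(v)>\Delta_{\ell-q}$, so $u\in H_{\ell-q}(j-1)\subseteq H_{\ell-q}^+(j-1)$, making $H_{\ell-q}^+(j-1)$ nonempty. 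At global iteration $j-1$, phase $\ell-q$ is in its $(qt+i-1)$-th iteration (and it has indeed started, since $qt+i-1\ge1$), so applying \cref{lem:num-vert-H} to phase $\ell-q$ yields $\Delta_{\ell-q-1}^{(s-1)d_{qt+i-1}}\le 2n$. Since $qt\ge1$ we have $d_{qt+i-1}=2^{qt-1}d_i\ge d_i$, hence $\Delta_{\ell-q-1}^{(s-1)d_i}\le\Delta_{\ell-q-1}^{(s-1)d_{qt+i-1}}\le 2n$, as wanted.

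Finally, \cref{lem:num-vert-H} holds with high probability for each fixed (phase, iteration) pair, and there are only $O((\log\log n)^2)$ such pairs used across the whole algorithm, so a union bound makes the entire statement hold w.h.p. The main obstacle I foresee is the index bookkeeping in the case $q\ge1$: one has to verify that phase $\ell-q$ is at iteration $qt+i-1$ when the global iteration is $j-1$, and that this index is at least $i$ --- this is precisely what allows the bound from \cref{lem:num-vert-H} for phase $\ell-q$ to be transferred to the exponent $d_i$ that appears in \cref{lem:neighborhood}. Everything else is elementary arithmetic with the exponents $\Delta_{\ell-q-1}=\Delta^{1/2^{\ell-q-1}}$, already packaged in \cref{lem:neighborhood,lem:num-vert-H}.
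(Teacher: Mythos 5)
Your proposal is correct and follows essentially the same route as the paper's proof: classify $v$ by its class $P^{(q)}_{\ell,j}$, derive $\Delta_{\ell-q-1}^{(s-1)d_{i+qt-1}}\le 2n$ from nonemptiness of the relevant $H$-set via \cref{lem:plj} and \cref{lem:num-vert-H} (with exactly the index bookkeeping $j-1=(\ell-q-1)t+(qt+i-1)$ that you flag), and conclude by exponent arithmetic with \cref{lem:neighborhood}. The only cosmetic difference is how the $+2$ in the exponent is absorbed: the paper uses $d_i\ge 2$ so that $(d_i+2)/\bigl((s-1)d_{i+qt-1}\bigr)\le 2/(s-1)\ll\delta$, whereas you invoke the degree-reduction preprocessing $\Delta\le n^{\delta/9}$ of Remark~\ref{remark:mpc}; both are valid.
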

\begin{proof}
Let $q$ such that $v\in P_{\ell,j}^{(q)}$.
 Such $v$ exists only if $|P_{\ell,j}^{(q)}|\ge 1$. 
    We consider two cases.
    \begin{enumerate}
        \item If $q=0$, by \cref{lem:plj} and $|P_{\ell,j}^{(0)}|\ge 1$ we have  $n/\Delta_{\ell-1}^{(s-1)d_i}\ge 1$.
        Then by \cref{lem:neighborhood}, the memory is $ 
          \Delta_{\ell-1}^{d_i+2}\cdot \polylog n \le n^{(d_i+2)\big /\left ((s-1)d_i\right )}\cdot \polylog n \ll  n^\delta,
        $ where the last inequality follows from $s\ge 10/\delta$ and $d_i\ge 2$.
    
    \item If $1\le q< \ell$, by  \cref{lem:plj} and $|P_{\ell,j}^{(q)}|\ge 1$ we have $|H_{\ell-q}(j-1)|\ge 1$, which then implies $2n/\Delta_{\ell-q-1}^{(s-1)d_{i+qt-1}} \ge 1$ by \cref{lem:num-vert-H}.  
  Then similarly by \cref{lem:neighborhood}, the memory requirement is
 $  \Delta_{\ell-q-1}^{d_i+2}\cdot \polylog  n  \le (2n)^{(d_i+2)\big /\left ((s-1)d_{i+qt-1}\right )}\cdot \polylog n
       \ll n^\delta. $ \qedhere
    \end{enumerate}
\end{proof}


\begin{theorem}[Global memory constraint]
With high probability, the global memory for storing the  neighborhoods  is $\tilde O(n)$. 
\end{theorem}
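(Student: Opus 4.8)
The plan is to bound, at an arbitrary fixed iteration $j$, the total memory used for the stored neighborhoods summed over all active phases, and then to use that there are only $L=O(\log\log\Delta)=O(\log\log n)$ phases. For a fixed active phase $\ell$, writing $j=(\ell-1)t+i$, I would start from the partition $H_\ell^+(j)=\mathop{\dot{\bigcup}}_{0\le q<\ell}P_{\ell,j}^{(q)}$ and invoke \cref{lem:neighborhood}, which says every $v\in P_{\ell,j}^{(q)}$ contributes at most $\Delta_{\ell-q-1}^{d_i+2}\polylog n$ memory. Thus the memory used by phase $\ell$ at iteration $j$ is at most $\sum_{q=0}^{\ell-1}|P_{\ell,j}^{(q)}|\cdot\Delta_{\ell-q-1}^{d_i+2}\polylog n$, and it remains to show this is $O(n\polylog n)$.

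Next I would plug in the cardinality bounds of \cref{lem:plj}. For the $q=0$ term, the first bound there gives $|P_{\ell,j}^{(0)}|\le n/\Delta_{\ell-1}^{(s-1)d_i}$, so this term is at most $n\,\Delta_{\ell-1}^{d_i+2-(s-1)d_i}\polylog n\le n\polylog n$, since $s\ge 10/\delta\ge 10$ and $d_i\ge 2r\ge 2$ make the exponent negative. For $1\le q<\ell$, I would combine the second bound in \cref{lem:plj} with \cref{lem:num-vert-H} applied to iteration $j-1=(\ell-q-1)t+(i+qt-1)$ to get $|P_{\ell,j}^{(q)}|\le |H_{\ell-q}(j-1)|\cdot\Delta_{\ell-q-1}^{(sd_i+3)r}\le 2n\,\Delta_{\ell-q-1}^{(sd_i+3)r-(s-1)d_{i+qt-1}}$. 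Multiplying by $\Delta_{\ell-q-1}^{d_i+2}\polylog n$, the exponent of $\Delta_{\ell-q-1}$ in the $q$-th term becomes $-(s-1)d_{i+qt-1}+(sd_i+3)r+d_i+2$; since $d_{i+qt-1}=2^{qt-1}d_i$ grows geometrically in $q$ whereas the positive correction is only $O(sr\,d_i)$, taking the lag $t$ to be a sufficiently large constant (depending on $s$ and $r$) makes this exponent at most $-(s-1)2^{qt-2}d_i$ for every $q\ge1$. Writing $\Delta_{\ell-q-1}=\Delta_{\ell-1}^{2^q}$, the $q$-th term is then at most $2n\polylog n/\Delta_{\ell-1}^{(s-1)2^{q(t+1)-2}d_i}$, and summing this geometrically decaying series over $q\ge1$ gives $O(n\polylog n)$. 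Hence phase $\ell$ uses $O(n\polylog n)$ memory, and summing over the $O(\log\log n)$ active phases yields total neighborhood memory $\tilde O(n)$.

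To finish I would observe that the only randomized ingredients are \cref{lem:num-vert-H} and \cref{lem:plj}, each holding with high probability, and that there are only $\poly\log n$ relevant pairs $(\ell,j)$ over the whole execution, so a union bound gives the bound with high probability. The one step that needs genuine care is checking that the polynomial-in-$d_i$ blowup $\Delta_{\ell-q-1}^{d_i+2}$ coming from the stored neighborhood sizes does not overwhelm the (doubly exponential in $q$) decay $\Delta_{\ell-q-1}^{-(s-1)d_{i+qt-1}}$ supplied by \cref{lem:num-vert-H}; but this is precisely the trade-off already balanced in the proof of \cref{lem:num-vert-H}, so it reduces to re-choosing $t$ a large enough constant and presents no real obstacle.
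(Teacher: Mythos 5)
Your proposal is correct and follows essentially the same route as the paper: it sums the per-vertex bound of \cref{lem:neighborhood} over the partition into $P_{\ell,j}^{(q)}$, plugs in \cref{lem:plj} together with \cref{lem:num-vert-H} at iteration $j-1$, and controls the resulting series by the same geometric-decay calculation (choosing the lag $t$ large enough) that the paper invokes as "a similar calculation as in the proof of \cref{lem:num-vert-H}," before summing over the $O(\log\log n)$ phases. You merely spell out the exponent arithmetic and the union bound more explicitly than the paper does.
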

\begin{proof}
Since there are only $L=O(\log \log \Delta)$ phases running concurrently, it suffices to show that, for every $\ell \in [L]$, the total memory for storing the $d_i$-hop neighborhoods of all vertices $v$ in graph $H_{\ell}^+(j)$ is at most $\tilde O(n)$, where $j=(\ell-1)t+i$.
 
 By \cref{lem:neighborhood}, the total memory can be bounded by
 \[
 |P_{\ell,j}^{(0)}|\cdot\Delta_{\ell-1}^{d_i+2} \cdot \polylog n+ \sum_{1\le q <\ell}|P_{\ell,j}^{(q)}|\cdot  \Delta_{\ell-q-1}^{d_i+2}\cdot \polylog n\]
\begin{align*}
    \le  & \ \frac{n}{\Delta_{\ell-1}^{(s-1)d_i}}\cdot \Delta_{\ell-1}^{d_i+2}\cdot \polylog n 
    + \sum_{1\le q< \ell}|H^+_{\ell-q}(j-1)|\cdot \Delta_{\ell-q-1}^{(sd_i +3)r}\cdot \Delta_{\ell-q-1}^{d_i+2}\cdot \polylog n\\ 
    \le &  \ \tilde O(n),
\end{align*}
where the last inequality follows from a similar calculation as in the proof of \cref{lem:num-vert-H}.
\end{proof}

\section{4-coloring of trees in $O(\log \log n)$ rounds}
In this section, we present an algorithm that colors a tree with $4$ colors in $O(\log\log n)$ MPC rounds and $O(m)$ global memory, thus providing a proof for \Cref{thm:4-coloring}. We start with a high-level overview, and then fill in the details of the algorithm and its analysis.

\subparagraph*{High-Level Overview} Our $4$-coloring algorithm starts by randomly partitioning the vertices of the input tree into two sets. Each set induces a forest such that each connected component has a diameter of $O(\log n)$, with high probability. Each connected component corresponds to a tree and will be rooted (i.e., orienting each edge towards the root) by using $\Theta(\log^2 n)$ parallel black-box invocations to the connected components algorithm of Behnezhad et al.~\cite{BehnezhadDELM19} (which improved on the earlier work by Andoni et al.~\cite{andonifocs}). We note that this connected components algorithm runs $O(\log D + \log\log n)$ rounds, where $D$ denotes the maximum component diameter, using strongly sublinear local memory and $O(m)$ global memory. Once the tree is rooted, computing a $2$-coloring in $O(\log \log n)$ rounds is easy: we can learn the distance from the root, by pointer jumping along the outgoing edges and leveraging that the diameter is bounded by $O(\log n)$, and then $2$-color nodes based on odd or even distances. Thus, each of the two forests can be colored with $2$ colors. This results in a $4$-coloring of the complete tree. The presented algorithm would need a slightly superlinear global memory of  $\Omega(n\log^2 n)$. To alleviate this issue and work with only $O(n)$ global memory, our actual algorithm first reduces the number of vertices to $O(n/\poly(\log n))$ by using the well-known peeling algorithm~\cite{barenboim2010sublogarithmic}, which in $O(\log\log n)$ iterations repeatedly removes vertices of degree at most $2$. These removed vertices are then colored at the very end of the algorithm, after coloring those $O(n/\poly(\log n))$ remaining vertices, in $\Theta(\log \log n)$ additional MPC rounds. 

\subsection{Details + Analysis}
In this section, we explain and analyze the algorithm in detail. Let $T$ denote the input tree.
\subsubsection{Reducing the number of vertices}
\label{subsec:reducing}
First, we explain and analyze the peeling algorithm which reduces the number of vertices to $O(n/\log^2 n)$. The peeling algorithm consists of $N = \Theta(\log \log n)$ iterations. In each iteration, we remove all vertices that have at most $2$ neighbors in the current graph. In each iteration of the peeling algorithm, a constant fraction of the remaining vertices are removed. 

\begin{lemma}
Let $G = (V,E)$ be a forest. Then, $|\{v \in V \colon \deg(v) \geq 3)\}| \leq \frac{2}{3}|V|$.
\end{lemma}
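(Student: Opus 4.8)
The plan is a one-line global degree count, exploiting the only structural fact we need about forests: a forest on $|V|$ vertices has at most $|V|-1$ edges.

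First I would set $n = |V|$ and split $V$ into $A = \{v \in V : \deg(v) \geq 3\}$ and its complement $B = \{v \in V : \deg(v) \leq 2\}$, writing $a = |A|$. Next I would apply the handshake identity together with the edge bound for forests: $\sum_{v \in V} \deg(v) = 2|E| \leq 2(n-1)$. Then I would lower-bound the same sum by discarding all contributions except those of vertices in $A$, each of which is at least $3$ (the remaining vertices contribute at least $0$), giving $3a \leq \sum_{v \in V} \deg(v) \leq 2(n-1) < 2n$. Rearranging yields $a \leq \tfrac{2}{3} n = \tfrac{2}{3}|V|$, which is the claim.

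I do not expect any real obstacle here, as the argument is entirely elementary. The only minor points of care are (i) to use $|E| \leq |V| - 1$ rather than equality, since a forest need not be connected, and (ii) to note that the boundary case $|V| = 0$ is vacuous. In fact this computation gives slightly more, namely $a \leq \lfloor 2(|V|-1)/3 \rfloor$, so weakening to the stated $\tfrac{2}{3}|V|$ loses nothing and raises no rounding subtleties; this same bound is exactly what makes each peeling iteration remove a constant fraction of the remaining vertices, so that $\Theta(\log\log n)$ iterations suffice to shrink the graph to $O(n/\log^2 n)$ vertices.
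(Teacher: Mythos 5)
Your proof is correct and is essentially the paper's own argument: both bound $3\,|\{v : \deg(v)\ge 3\}|$ by the degree sum via the handshake lemma and then use $|E|\le |V|-1$ for forests. The only (immaterial) difference is that you keep the slightly sharper bound $2(|V|-1)$ where the paper relaxes to $2|V|$.
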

\begin{proof} We have $3 \cdot |\{v \in V \colon \deg(v) \geq 3)\}| \leq \sum_{v\in V} \deg(v) \leq 2|E| \leq 2|V|,$ where the last inequality follows, as each forest has at most $|V| - 1$ edges and the second-last inequality is the well-known handshaking lemma.
\end{proof}

Thus, after iteration $i$ of the peeling algorithm, the number of remaining vertices is at most $n \cdot (2/3)^i$. Hence, after running $N = \lceil 2 \log_{3/2} \log n \rceil$ iterations of the peeling algorithm, the number of remaining vertices is at most $O(n/\log^2(n))$. Note that each iteration of the peeling algorithm can easily be implemented in $O(1)$ MPC rounds and linear global memory. 

Next, we explain how to color the vertices removed by the peeling algorithm, once all the remaining vertices are colored, using no additional colors. The idea is well-known in the context of LOCAL algorithms. For $i \in [N]$, let $W_i$ denote the set of vertices removed in the $i$-th iteration. The basic idea is to first color all vertices in $W_N$, then the ones in $W_{N-1}$ and so on. By coloring the vertices in that order, the number of neighbors that previously got assigned a color is upper bounded by $2$. Hence, we can assign each node one of the first $3$ colors without creating any conflict. In order for this procedure to be efficient, we need to color multiple nodes simultaneously. However, coloring all nodes in $W_i$ in parallel is problematic, as $W_i$ may contain neighboring nodes. To circumvent this problem, we temporarily color $T[W_i]$ with a constant number of colors---these are not a part of the output coloring but merely used as a schedule in computing the output coloring. In fact, we can color all of $T[W_1], T[W_2], \ldots, T[W_N]$ simultaneously, each separately using constant many colors, in $O(\log ^ * n)$ MPC rounds. This is done by simulating the LOCAL algorithm of Cole and Vishkin~\cite{ColeVishkin}. Then, to compute the output coloring, for each $T[W_i]$, we iterate through the constant many schedule colors in $T[W_i]$, and we compute the output color of all the nodes that got assigned the same schedule color in parallel. Hence, one never assigns two neighboring nodes an output color simultaneously. Coloring the nodes of one schedule color class in $W_i$ can be simulated in $O(1)$ MPC rounds. Thus, we can color all the deleted vertices in $O(1) \cdot O(1) \cdot O(\log \log n)$ MPC rounds and using linear global memory. Hence, what remains is to show how to color a forest consisting of $O(n/\log^2(n))$ nodes in $O(\log \log n)$ MPC rounds, using $O(n)$ global memory.

\subsubsection{Random Partitioning}
\label{subsec:randPartitiong}
In order for our procedure to be efficient, we need each connected component under consideration to have a diameter of $O(\log n)$. We achieve this by randomly partitioning the remaining vertices into two sets $V_1$ and $V_2$, with each remaining vertex being in either one of them with a probability of $1/2$, independent of the other vertices. The partitioning leads to each connected component of $T[V_1]$ and $T[V_2]$ having a diameter of $O(\log n)$, with high probability. The argument is simple: each path of length $\omega(\log n)$ is present in $T[V_1]$ or $T[V_2]$ with a probability less than $1/\poly(n)$. Then, we can union bound over the at most $O(n^2)$ distinct paths in the tree to conclude that no path of length $\omega(\log n)$ will be present in $T[V_1]$ or $T[V_2]$, with probability $1-1/\poly(n)$.

\subsubsection{Handling connected components in parallel}
In the next paragraph, we discuss an algorithm that computes a $2$-coloring of a given connected component in $O(\log \log n)$ rounds. The algorithm works under the assumption that each node of the component has a unique identifier in $[\eta]$, with $\eta$ denoting the size of the connected component. Furthermore, it assumes that the edges of the connected component are the only input given to the algorithm. It requires each machine to have a local memory of $O(\min(\eta, n^{\delta}))$ and uses $O(\eta \log^2n)$ global memory. Before going into the details of how the algorithm works, we first argue why the existence of such an algorithm readily implies that we can $2$-color each connected component in $T[V_1]$ and $T[V_2]$ with $2$ colors in $O(\log \log n)$ MPC rounds and $O(m)$ global memory. 

Before $2$-coloring each connected component, we start by computing the connected components of both $T[V_1]$ and $T[V_2]$ by using the connected component algorithm of \cite{BehnezhadDELM19}. The algorithm runs in $O(\log \log n  + \log D)$ rounds, where $D$ denotes the maximal component diameter. We can assume $D = O(\log n)$. Thus, we can find the connected components of both $T[V_1]$ and $T[V_2]$ in $O(\log \log n)$ rounds. 
After having computed the connected components,  we create one tuple per node with the first entry being equal to its component identifier and the second entry being equal to the identifier of the node itself. 
By sorting these tuples lexicographically in $O(1)$ rounds using the algorithm of \cite{goodrich2011sorting}, we can identify the size of each connected component. Furthermore, one can assign each connected component a number of machines proportional to its size, such that the total memory capacity of the assigned machines is $O(n)$.  
That is, each large connected component with $\eta \geq n^{\delta}$ nodes gets assigned $\Theta(\log^2 (n) \eta / n^\delta)$ many machines. 
After relabeling the vertices of the connected component with unique identifiers between $1$ and $\eta$, the machines assigned to such a large component can $2$-color the component in $O(\log \log n)$ rounds.  
Each small component with less than $n^\delta$ vertices is stored on one machine, which may be responsible for multiple small components.
A $2$-coloring of such a small component can be computed locally on the corresponding machine. 
 
\subsubsection{Rooting and $2$-coloring a tree with diameter $O(\log n)$}
Next, we focus on a single connected component in either $T[V_1]$ or $T[V_2]$ and show how to root the connected component by using $O(\log^2 n)$ invocations of the connected component algorithm in parallel. As stated in the previous paragraph, we assume that each node of the connected component has a unique identifier in $ [\eta]$, with $\eta$ denoting the size of the component. We pick an arbitrary node as the root, i.e., the root with identifier $1$. We want each node, except for the root, to learn which neighboring node is its parent. Our algorithm relies on the following simple, but crucial observation: the parent of a node is the only neighboring node that, when deleting a subset of the edges in the tree, can still remain in the same connected component as the root, while the node itself got disconnected from the root. 

To make use of this observation, we remove each edge independently with a probability of $1 / \log(n)$. Let $v$ be an arbitrary node. Notice that if $v$ is disconnected from the root but a neighbor $u$ of $v$ is connected to the root, then one can deduce that $u$ is the parent of $v$.  Let $p(v)$ be the parent of $v$. Node $p(v)$ remains in the same connected component as the root if all of the at most $O(\log n)$ edges on the  path from $p(v)$ to the root remain in the graph. This happens with a probability of at least $(1-1/\log n)^{O(\log n)} = \Omega(1)$. Conditioning on this event, node $v$ gets disconnected from the root if the edge between $v$ and $p(v)$ gets removed, which happens with a probability of $1/\log(n)$. If both of these happen, we have a good event and $v$ can identify its parent $p(v)$. Thus, we can conclude that a fixed vertex $v$ can determine its parent with a probability of $\Omega(1/\log(n))$. 

By running this procedure $\Theta(\log^2 n)$ times in parallel (and independently), we can conclude---by a Chernoff bound and a union bound over all vertices---that each vertex determines its parent with high probability. Hence, we can $2$-color the connected component using $O(\min(\eta, n^\delta))$ memory per machine and $O(\log^2(n) \eta)$ global memory. 

Once we have this orientation towards the root, and since the diameter of the tree is $O(\log n)$, each node can compute the distance of it from the root using $\Theta(\log \log n)$ steps of pointer jumping, as we explain next.

\begin{lemma}
Given a tree of depth $O(\log n)$ with $\eta$ nodes, which is oriented towards the root, we can compute the distance of each node from the root in $O(\log \log n)$ rounds, using $O(\min(\eta, n^{\delta}))$ local memory and $O(\eta \log^2 n)$ global memory.    
\end{lemma}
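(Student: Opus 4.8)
The plan is to combine the rooted orientation (each non-root node knows a pointer to its parent, with the root pointing to itself) with classical pointer jumping, and argue that $O(\log\log n)$ doubling steps suffice because the tree has depth $O(\log n)$. First I would have each node $v$ maintain a pointer $\mathrm{anc}(v)$, initialized to its parent (and $\mathrm{anc}(\text{root})=\text{root}$), together with a counter $c(v)$ initialized to $1$ for non-root nodes and $0$ for the root; the invariant is that after $k$ rounds $\mathrm{anc}(v)$ is the ancestor of $v$ at distance $\min(2^k, \mathrm{depth}(v))$ from $v$ along the root path, and $c(v)$ equals that distance whenever $\mathrm{anc}(v)$ is the root, and otherwise a lower bound $2^k$ on $\mathrm{depth}(v)$. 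In each round, every node $v$ whose current $\mathrm{anc}(v)$ is not yet the root queries $u=\mathrm{anc}(v)$, receives $\mathrm{anc}(u)$ and $c(u)$, and updates $\mathrm{anc}(v)\gets \mathrm{anc}(u)$ and $c(v)\gets c(v)+c(u)$ (once $u$ is the root this correctly records the exact distance); nodes whose pointer already reached the root simply freeze. Since the depth is $O(\log n)$, after $O(\log\log n)$ rounds every node's pointer has reached the root and $c(v)$ holds the exact distance.

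Next I would spell out the MPC implementation so that the round and memory bounds hold. Each node, its parent pointer, its current $\mathrm{anc}$-pointer, and its counter occupy $O(\log n)$ bits, so the whole state has total size $O(\eta\log n)$, comfortably within $O(\eta\log^2 n)$ global memory, and each machine holds only $O(\min(\eta,n^\delta))$ of this data. One pointer-jumping round is a constant number of standard MPC primitives: it is essentially a "lookup" where each node needs the record currently stored at the machine holding $\mathrm{anc}(v)$. This is exactly the kind of indexed-access / propagation step that can be done in $O(1)$ MPC rounds with linear global memory using sorting and routing primitives (e.g.\ sort the query set $\{(\mathrm{anc}(v),v)\}$ together with the node records keyed by id, then route the answers back), as developed in the references cited earlier (e.g.\ \cite{goodrich2011sorting} and \cite[Section E]{andonifocs}); note no machine ever receives more than $O(\min(\eta,n^\delta))$ words because each node issues exactly one query and each node is the target of at most its number of "children in the jump graph", which we can keep balanced by the same sorting/routing. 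Hence each of the $O(\log\log n)$ pointer-jumping rounds costs $O(1)$ MPC rounds, giving $O(\log\log n)$ rounds overall.

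Finally I would close the correctness argument by an explicit induction on $k$ establishing the invariant above, using that on a tree the ancestor at distance $d$ from $v$ is well-defined for all $d\le\mathrm{depth}(v)$ and that composing the $2^{k-1}$-step ancestor map with itself yields the $2^k$-step one (clamped at the root), and that the counter additively accumulates the true number of edges traversed once a pointer locks onto the root. The main obstacle I expect is not the pointer-jumping logic, which is standard, but the careful MPC bookkeeping: ensuring that every intermediate lookup round respects the strongly sublinear local memory, i.e.\ that no machine is asked to answer too many queries at once. This is handled by observing that in round $k$ the set of distinct targets is a subset of the nodes, each node is queried by the set of nodes whose $2^{k-1}$-ancestor it is, and these "buckets" can be processed in constant rounds via sorting and a load-balanced broadcast — the same load-balancing idea already invoked in \cref{remark:mpc} — so at each step communication per machine stays $O(\min(\eta,n^\delta))$.
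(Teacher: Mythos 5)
Your proposal is correct and follows essentially the same route as the paper: pointer jumping along parent pointers for $O(\log\log n)$ doubling steps (sufficient since the depth is $O(\log n)$), with each jump implemented in $O(1)$ MPC rounds via sorting and a constant-depth, load-balanced broadcast so that a node targeted by many pointers does not overload any machine. The only difference is that you accumulate an additive distance counter along the jumps, whereas the paper accumulates ancestor lists $AL(v)$ and reads the distance off at the end; your variant is harmless, in fact slightly leaner in message size and global memory, and still respects the stated $O(\min(\eta,n^{\delta}))$ local and $O(\eta\log^2 n)$ global bounds.
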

\begin{proof}
We use a \emph{pointer jumping} idea, along with some sorting subroutines of MPC. For each node $v$, we keep track of one pointer $p(v)$ which we initially set equal to its parent node (for the root $r$ the pointer points to the node itself). Also, we define the ancestor list $AL(v)$, which  is a set that initially only contains the node $v$.

Then, each iteration of pointer jumping consists of two steps: first, each node $v$ updates its ancestor list as $AL(v)\gets AL(v) \cup AL(p(v))$. Then, in the second step, we redefine $p(v)\gets p(p(v))$. It is clear that after $O(\log\log n)$ iterations, $p(v)$ is equal to the root of its component and $AL(v)$ is equal to the set of all of its ancestors, out of which $v$ learns its distance to the root.

Implementing the pointer jumping needs some care. In each iteration of pointer jumping, each node $v$ needs to learn the value of $p(u)$, where $u=p(v)$. While node $v$ wants to learn only one value, node $u$ might have to inform many nodes about the value of $p(u)$. Essentially the same procedure needs to be done for learning $AL(u)$, so we focus on $p(u)$.

We use a basic sorting subroutine \cite{goodrich2011sorting}, which runs in constant time. For the purpose of sorting, for each node $v$, define an item $\langle p(v), v\rangle$. Here, $v$ and $p(v)$ are the respective identifiers, which are numbers in $\{1, \dots, \eta\}$. Moreover, add for each node $v$ two extra items $\langle v, -\inf\rangle $ and $\langle v, +\inf\rangle $. Sort all these items lexicographically. At the end of the sorting, each machine that holds an item knows the rank of this item in the sorted list.

Once we have the items sorted, for each node $u$, in the sorted order, the items start with $\langle u, -\inf\rangle $, end with $\langle u, +\inf\rangle $, and in between these two are the entries of all nodes $v$ such that $p(v)=u$. As a result, the machine that holds node $u$ and generated the two items $\langle u, -\inf\rangle $ and $\langle u, +\inf\rangle $ knows the ranks of these two items in the sorted order. 

Now, split the items among the machines, in an ordered way, so that the $i^{th}$ machine receives the items with rank $[S(i-1)+1, S i]$. Finally, use a broadcast tree of constant depth~\cite{goodrich2011sorting} so that the machine that holds node $u$ informs the machines that, in the sorted order of items, hold items with $u$ as the first entry, about the value of $p(u)$. This way, any machine that holds an item $\langle p(v), v\rangle $ where $p(v)=u$ learns $p(u)$. Hence, that machine can add this value $p(u)$ as a third field to create $\langle p(v), v, p(u)\rangle $. At the very end, we send these three-entry messages back to the machine that initially held $v$. Hence, node $v$ now knows the value of $p(u)$ where $u=p(v)$. 

The same process can be used so that node $v$ learns $AL(p(u))$, where $u=p(v)$. In that case, the message has $O(\log n)$ words --- the maximum number of ancestors in an $O(\log n)$ depth tree. The algorithm uses $O(\min(\eta, n^\delta))$ local memory and $O(\eta \log^2 n )$ global memory, thus proving the lemma.
\end{proof}

Once the nodes know their distance from their respective root, a $2$-coloring is immediate: nodes at odd distances get one color and nodes at even distances get the other. This gives a separate $2$-coloring of each of $T[V_1]$ and $T[V_2]$. Overall, this leads to a $4$-coloring of $T[V_1 \cup V_2]$.  

\subparagraph*{Remark about a work of Brandt et al.~\cite{brandt2019breaking}}

The main result of Brandt et al.~\cite{brandt2019breaking} is a $\Theta(\log^3\log n)$ algorithm for finding a Maximal Independent Set on trees. As a subroutine, they used a simple, deterministic (graph exponentiation style) algorithm to root a tree with a diameter of $D$ in $O(\log D \log \log n)$ rounds, under the assumption of initially having $\Omega(D^3)$ global memory per node. Using their rooting algorithm as a black-box in our approach, we can directly improve this complexity. Namely, after the step of reducing the number of vertices to $O(n/\polylog n)$, as explained in \Cref{subsec:reducing}, and randomly splitting the vertices into two sets $V_1$ and $V_2$, as explained in \Cref{subsec:randPartitiong}, instead of our proposed rooting algorithm, one could invoke the rooting procedure of Brandt et al.~\cite{brandt2019breaking}, which runs in $\Theta(\log^2\log n)$ rounds. Overall, this gives a $4$-coloring in $O(\log^2\log n)$ rounds. Having this coloring, we can easily solve MIS in $O(1)$ additional rounds, simply by iterating through color classes of nodes and greedily adding nodes to the MIS.

\bibliographystyle{plainurl}

\bibliography{ref}

\begin{thebibliography}{10}

\bibitem{alon1986fast}
Noga Alon, L{\'a}szl{\'o} Babai, and Alon Itai.
\newblock A {F}ast and {S}imple {R}andomized {P}arallel {A}lgorithm for the
  {M}aximal {I}ndependent {S}et {P}roblem.
\newblock {\em J. Algorithms}, 7(4):567--583, 1986.
\newblock \href {http://dx.doi.org/10.1016/0196-6774(86)90019-2}
  {\path{doi:10.1016/0196-6774(86)90019-2}}.

\bibitem{andonifocs}
Alexandr Andoni, Zhao Song, Clifford Stein, Zhengyu Wang, and Peilin Zhong.
\newblock Parallel graph connectivity in log diameter rounds.
\newblock In {\em Proceedings of the 59th {IEEE} Symposium on Foundations of
  Computer Science ({FOCS})}, pages 674--685, 2018.
\newblock \href {http://dx.doi.org/10.1109/FOCS.2018.00070}
  {\path{doi:10.1109/FOCS.2018.00070}}.

\bibitem{Assadi2017CoresetsME}
Sepehr Assadi, MohammadHossein Bateni, Aaron Bernstein, Vahab Mirrokni, and
  Cliff Stein.
\newblock Coresets meet {EDCS}: Algorithms for matching and vertex cover on
  massive graphs.
\newblock In {\em Proceedings of the 30th {ACM-SIAM} Symposium on Discrete
  Algorithms ({SODA})}, pages 1616--1635, 2019.
\newblock \href {http://dx.doi.org/10.1137/1.9781611975482.98}
  {\path{doi:10.1137/1.9781611975482.98}}.

\bibitem{barenboim2010sublogarithmic}
Leonid Barenboim and Michael Elkin.
\newblock {Sublogarithmic Distributed MIS Algorithm for Sparse Graphs Using
  Nash-Williams Decomposition}.
\newblock {\em Distributed Computing}, 22(5-6):363--379, 2010.
\newblock \href {http://dx.doi.org/10.1007/s00446-009-0088-2}
  {\path{doi:10.1007/s00446-009-0088-2}}.

\bibitem{barenboim2016locality}
Leonid Barenboim, Michael Elkin, Seth Pettie, and Johannes Schneider.
\newblock The locality of distributed symmetry breaking.
\newblock {\em Journal of the ACM}, 63(3):20:1--20:45, 2016.
\newblock \href {http://dx.doi.org/10.1145/2903137}
  {\path{doi:10.1145/2903137}}.

\bibitem{hajiaghayi1}
MohammadHossein Bateni, Soheil Behnezhad, Mahsa Derakhshan, MohammadTaghi
  Hajiaghayi, and Vahab~S. Mirrokni.
\newblock Brief announcement: Mapreduce algorithms for massive trees.
\newblock In {\em Proceedings of the 45th International Colloquium on Automata,
  Languages, and Programming ({ICALP})}, pages 162:1--162:4, 2018.
\newblock \href {http://dx.doi.org/10.4230/LIPIcs.ICALP.2018.162}
  {\path{doi:10.4230/LIPIcs.ICALP.2018.162}}.

\bibitem{hajiaghayi2}
MohammadHossein Bateni, Soheil Behnezhad, Mahsa Derakhshan, MohammadTaghi
  Hajiaghayi, and Vahab~S. Mirrokni.
\newblock Massively parallel dynamic programming on trees.
\newblock {\em arXiv preprint}, 2018.
\newblock \href {http://arxiv.org/abs/1809.03685} {\path{arXiv:1809.03685}}.

\bibitem{beame2014skew}
Paul Beame, Paraschos Koutris, and Dan Suciu.
\newblock {Skew in Parallel Query Processing}.
\newblock In {\em Proceedings of the 33rd {ACM} {SIGMOD}-{SIGACT}-{SIGART}
  Symposium on Principles of Database Systems ({PODS})}, pages 212--223, 2014.
\newblock \href {http://dx.doi.org/10.1145/2594538.2594558}
  {\path{doi:10.1145/2594538.2594558}}.

\bibitem{BehneBDFHKU2019Massively}
Soheil Behnezhad, Sebastian Brandt, Mahsa Derakhshan, Manuela Fischer,
  MohammadTaghi Hajiaghayi, Richard~M. Karp, and Jara Uitto.
\newblock Massively parallel computation of matching and {MIS} in sparse
  graphs.
\newblock In {\em Proceedings of the 38th {ACM} Symposium on Principles of
  Distributed Computing ({PODC})}, pages 481--490, 2019.
\newblock \href {http://dx.doi.org/10.1145/3293611.3331609}
  {\path{doi:10.1145/3293611.3331609}}.

\bibitem{BehneDHK2018Massively}
Soheil Behnezhad, Mahsa Derakhshan, MohammadTaghi Hajiaghayi, and Richard~M.
  Karp.
\newblock {Massively Parallel Symmetry Breaking on Sparse Graphs: MIS and
  Maximal Matching}.
\newblock {\em arXiv preprint}, 2018.
\newblock \href {http://arxiv.org/abs/1807.06701} {\path{arXiv:1807.06701}}.

\bibitem{BehnezhadDELM19}
Soheil Behnezhad, Laxman Dhulipala, Hossein Esfandiari, Jakub {\L}\k{a}cki, and
  Vahab Mirrokni.
\newblock Near-optimal massively parallel graph connectivity.
\newblock In {\em Proceedings of the 60th {IEEE} Symposium on Foundations of
  Computer Science ({FOCS})}, pages 1615--1636, 2019.
\newblock \href {http://dx.doi.org/10.1109/FOCS.2019.00095}
  {\path{doi:10.1109/FOCS.2019.00095}}.

\bibitem{BehneHH2019Exponentially}
Soheil Behnezhad, MohammadTaghi Hajiaghayi, and David~G. Harris.
\newblock Exponentially faster massively parallel maximal matching.
\newblock In {\em Proceedings of the 60th {IEEE} Symposium on Foundations of
  Computer Science ({FOCS})}, pages 1637--1649, 2019.
\newblock \href {http://dx.doi.org/10.1109/FOCS.2019.00096}
  {\path{doi:10.1109/FOCS.2019.00096}}.

\bibitem{BrandFU2018Matching}
Sebastian Brandt, Manuela Fischer, and Jara Uitto.
\newblock Matching and {MIS} for uniformly sparse graphs in the low-memory
  {MPC} model.
\newblock {\em arXiv preprint}, 2018.
\newblock \href {http://arxiv.org/abs/1807.05374} {\path{arXiv:1807.05374}}.

\bibitem{brandt2019breaking}
Sebastian Brandt, Manuela Fischer, and Jara Uitto.
\newblock Breaking the linear-memory barrier in {M}{P}{C}: Fast {MIS} on trees
  with strongly sublinear memory.
\newblock In {\em Proceedings of the 26th International Colloquium on
  Structural Information and Communication Complexity ({SIROCCO})}, pages
  124--138, 2019.
\newblock \href {http://dx.doi.org/10.1007/978-3-030-24922-9\_9}
  {\path{doi:10.1007/978-3-030-24922-9\_9}}.

\bibitem{ColeVishkin}
Richard Cole and Uzi Vishkin.
\newblock Deterministic coin tossing and accelerating cascades: Micro and macro
  techniques for designing parallel algorithms.
\newblock In {\em Proceedings of the 18th ACM Symposium on Theory of
  Computing}, pages 206--219, 1986.

\bibitem{czumaj2019graph}
Artur Czumaj, Peter Davies, and Merav Parter.
\newblock Graph sparsification for derandomizing massively parallel computation
  with low space.
\newblock In {\em Proceedings of the 32nd {ACM} Symposium on Parallelism in
  Algorithms and Architectures ({SPAA})}, pages 175--185, 2020.
\newblock \href {http://dx.doi.org/10.1145/3350755.3400282}
  {\path{doi:10.1145/3350755.3400282}}.

\bibitem{czumaj2019roundsiam}
Artur Czumaj, Jakub {\L}{\k{a}}cki, Aleksander M{\k{a}}dry, Slobodan
  Mitrovi{\'{c}}, Krzysztof Onak, and Piotr Sankowski.
\newblock Round compression for parallel matching algorithms.
\newblock {\em {SIAM} Journal on Computing}, pages STOC18--1--STOC18--44, 2019.
\newblock \href {http://dx.doi.org/10.1137/18M1197655}
  {\path{doi:10.1137/18M1197655}}.

\bibitem{dean2008mapreduce}
Jeffrey Dean and Sanjay Ghemawat.
\newblock {MapReduce: Simplified Data Processing on Large Clusters}.
\newblock {\em Communications of the ACM}, 51(1):107--113, 2008.
\newblock \href {http://dx.doi.org/10.1145/1327452.1327492}
  {\path{doi:10.1145/1327452.1327492}}.

\bibitem{Ghaffari2017MISclique}
Mohsen Ghaffari.
\newblock Distributed {MIS} via all-to-all communication.
\newblock In {\em Proceedings of the 36th {ACM} Symposium on Principles of
  Distributed Computing {(PODC)}}, pages 141--149, 2017.
\newblock \href {http://dx.doi.org/10.1145/3087801.3087830}
  {\path{doi:10.1145/3087801.3087830}}.

\bibitem{GhaffGKMR2018Improved}
Mohsen Ghaffari, Themis Gouleakis, Christian Konrad, Slobodan Mitrovi{\'c}, and
  Ronitt Rubinfeld.
\newblock Improved massively parallel computation algorithms for {MIS},
  matching, and vertex cover.
\newblock In {\em Proceedings of the 37th ACM Symposium on Principles of
  Distributed Computing ({PODC})}, pages 129--138, 2018.
\newblock \href {http://dx.doi.org/10.1145/3212734.3212743}
  {\path{doi:10.1145/3212734.3212743}}.

\bibitem{GhaffKU2019Conditional}
Mohsen Ghaffari, Fabian Kuhn, and Jara Uitto.
\newblock Conditional hardness results for massively parallel computation from
  distributed lower bounds.
\newblock In {\em Proceedings of the 60th {IEEE} Symposium on Foundations of
  Computer Science ({FOCS})}, pages 1650--1663, 2019.
\newblock \href {http://dx.doi.org/10.1109/FOCS.2019.00097}
  {\path{doi:10.1109/FOCS.2019.00097}}.

\bibitem{GhaffU2019Sparsifying}
Mohsen Ghaffari and Jara Uitto.
\newblock Sparsifying distributed algorithms with ramifications in massively
  parallel computation and centralized local computation.
\newblock In {\em Proceedings of the 30th {ACM-SIAM} Symposium on Discrete
  Algorithms ({SODA})}, pages 1636--1653, 2019.
\newblock \href {http://dx.doi.org/10.1137/1.9781611975482.99}
  {\path{doi:10.1137/1.9781611975482.99}}.

\bibitem{goodrich2011sorting}
Michael~T. Goodrich, Nodari Sitchinava, and Qin Zhang.
\newblock {Sorting, Searching, and Simulation in the MapReduce Framework}.
\newblock In {\em Proceedings of the 22nd International Symposium on Algorithms
  and Computation}, pages 374--383, 2011.
\newblock \href {http://dx.doi.org/10.1007/978-3-642-25591-5_39}
  {\path{doi:10.1007/978-3-642-25591-5_39}}.

\bibitem{isard2007dryad}
Michael Isard, Mihai Budiu, Yuan Yu, Andrew Birrell, and Dennis Fetterly.
\newblock {Dryad: Distributed Data-Parallel Programs from Sequential Building
  Blocks}.
\newblock In {\em Proceedings of the 2nd ACM SIGOPS/EuroSys European Conference
  on Computer Systems}, pages 59--72, 2007.
\newblock \href {http://dx.doi.org/10.1145/1272996.1273005}
  {\path{doi:10.1145/1272996.1273005}}.

\bibitem{karloff2010model}
Howard Karloff, Siddharth Suri, and Sergei Vassilvitskii.
\newblock {A Model of Computation for MapReduce}.
\newblock In {\em Proceedings of the 21st {ACM-SIAM} Symposium on Discrete
  Algorithms ({SODA})}, pages 938--948, 2010.
\newblock \href {http://dx.doi.org/10.1137/1.9781611973075.76}
  {\path{doi:10.1137/1.9781611973075.76}}.

\bibitem{Kuhn:2016:LCL:2906142.2742012}
Fabian Kuhn, Thomas Moscibroda, and Roger Wattenhofer.
\newblock Local computation: Lower and upper bounds.
\newblock {\em Journal of the ACM}, 63(2):17:1--17:44, 2016.
\newblock \href {http://dx.doi.org/10.1145/2742012}
  {\path{doi:10.1145/2742012}}.

\bibitem{LattaMSV2011Filtering}
Silvio Lattanzi, Benjamin Moseley, Siddharth Suri, and Sergei Vassilvitskii.
\newblock Filtering: A method for solving graph problems in {MapReduce}.
\newblock In {\em Proceedings of the 23rd {ACM} Symposium on Parallelism in
  Algorithms and Architectures {(SPAA)}}, pages 85--94, 2011.
\newblock \href {http://dx.doi.org/10.1145/1989493.1989505}
  {\path{doi:10.1145/1989493.1989505}}.

\bibitem{Lenzen2010brief}
Christoph Lenzen and Roger Wattenhofer.
\newblock {Brief Announcement: Exponential Speed-Up of Local Algorithms Using
  Non-Local Communication}.
\newblock In {\em {Proceedings of the 29th {ACM} Symposium on Principles of
  Distributed Computing (PODC)}}, pages 295--296, 2010.
\newblock \href {http://dx.doi.org/10.1145/1835698.1835772}
  {\path{doi:10.1145/1835698.1835772}}.

\bibitem{linial}
Nathan Linial.
\newblock Locality in distributed graph algorithms.
\newblock {\em SIAM Journal on Computing}, 21(1):193--201, 1992.
\newblock \href {http://dx.doi.org/10.1137/0221015}
  {\path{doi:10.1137/0221015}}.

\bibitem{luby1986simple}
Michael Luby.
\newblock {A Simple Parallel Algorithm for the Maximal Independent Set
  Problem}.
\newblock {\em SIAM Journal on Computing}, 15(4):1036--1053, 1986.
\newblock \href {http://dx.doi.org/10.1137/0215074}
  {\path{doi:10.1137/0215074}}.

\bibitem{Nash-1964Decomposition}
Crispin St. John~Alvah Nash-Williams.
\newblock Decomposition of finite graphs into forests.
\newblock {\em J. London Math. Soc.}, 39(1):12, 1964.
\newblock \href {http://dx.doi.org/10.1112/jlms/s1-39.1.12}
  {\path{doi:10.1112/jlms/s1-39.1.12}}.

\bibitem{roughgarden2016shuffles}
Tim Roughgarden, Sergei Vassilvitskii, and Joshua~R. Wang.
\newblock Shuffles and circuits (on lower bounds for modern parallel
  computation).
\newblock {\em Journal of the ACM}, 65(6):41:1--41:24, 2018.
\newblock \href {http://dx.doi.org/10.1145/3232536}
  {\path{doi:10.1145/3232536}}.

\bibitem{white2012hadoop}
Tom White.
\newblock {\em {Hadoop: The Definitive Guide}}.
\newblock O'Reilly, 2012.

\bibitem{zaharia2010spark}
Matei Zaharia, Mosharaf Chowdhury, Michael~J. Franklin, Scott Shenker, and Ion
  Stoica.
\newblock {Spark: Cluster Computing with Working Sets}.
\newblock In {\em Proceedings of the 2nd USENIX Conference on Hot Topics in
  Cloud Computing}, 2010.

\end{thebibliography}
\end{document}